\documentclass{amsart}       
\usepackage{graphicx}
\usepackage{mathptmx}

\usepackage{latexsym}
\usepackage{amsmath}
\usepackage{epsfig}
\usepackage{amssymb}
\usepackage{enumerate}
\usepackage{xcolor}

\usepackage{amsmath, amsfonts, amssymb} 
\usepackage{mathtools}
\usepackage{enumitem}

\usepackage[numbers, sort&compress]{natbib}
\usepackage[colorlinks]{hyperref}
\hypersetup{
         citecolor=blue,
    linkcolor=red,
}

\usepackage{color}
\usepackage{xcolor}
\newcommand{\blue}{\textcolor{black}}
\newcommand{\orange}{\textcolor{black}}
\newcommand{\green}{\textcolor{black}}

\usepackage{algorithm}
\usepackage[noend]{algpseudocode}

\parskip=1em
\setlength{\unitlength}{1mm}

\newtheorem{theorem}{Theorem}[section]
\newtheorem{lemma}[theorem]{Lemma}
\newtheorem{corollary}[theorem]{Corollary}

\newtheorem{proposition}[theorem]{Proposition}

\begin{document}

\title{The hybrid number of a ploidy profile}

\author{Katharina T. Huber \and Liam J. Maher}
         \address{School of Computing Sciences, University of East Anglia,
Norwich, UK}
   \email{K.Huber@uea.ac.uk}
         \address{School of Computing Sciences, University of East Anglia,
Norwich, UK }
  \email{L.Maher@uea.ac.uk}

\subjclass{1991 Mathematics Subject Classification. 05C05; 92D15}

\date{\today}

\maketitle

\begin{abstract}
Polyploidization, whereby an organism inherits multiple copies of the genome of their parents, is an important evolutionary event that has been observed in plants and animals. One way to \green{study such events}
is in terms of the ploidy number of the species that make up a dataset
of interest.
It is therefore natural to ask: How much information about the evolutionary past 
of \green{the set of species that form} a dataset can be gleaned from the ploidy numbers of 
the species? 
To help answer \green{this} question, we introduce and study 
the novel concept of a ploidy
profile which allows us to formalize \green{it}
in terms of a multiplicity vector indexed by the species
the dataset \green{is comprised of}. Using the framework of a
phylogenetic network, \green{we present} a 
closed formula for computing \green{ the {\em hybrid
number} (i.e. the minimal number of polyploidization
events required to explain a ploidy profile) of a large class of ploidy profiles. This formula relies on the construction of a certain phylogenetic network from 
the simplification sequence of a ploidy 
profile and the hybrid number
of the ploidy profile with which this construction is
initialized. Both of them can be computed easily in case the ploidy numbers that make up the ploidy profile are not too large. To help illustrate the applicability of our approach, we apply it to a simplified   version of a publicly available \orange{Viola} dataset.}

\end{abstract}

\section{Introduction}
\label{intro}

Datasets such as the \blue{Viola} dataset considered in
\blue{\cite{MJDBBBO12}}  arise when species  inherit 
multiple sets of chromosomes from their parents. 
\green{Generally referred to as polyploidization,
this can be due to whole genome duplication \blue{(also called autopolyplodization)}
as in the case of e.g. watermelons and bananas  \cite{VBDG00},
or by obtaining an additional complete set of chromosomes
via hybridization \blue{(also called allopolyploidization)}, as in the case of the frog 
genus Xenopus
 \cite{O50}. This poses the following intriguing
  question at the center of this paper: How much 
  information about the evolutionary past of a
   set of species can be gleaned from the {\em ploidy number} (i.e. the number of complete chromosome sets in a genome)
   of the species? }  Evoking
   parsimony to capture the idea that polyploidization is a relatively 
   rare evolutionary event we \green{re-phrase} this question as follows: What is the 
   minimum number of polyploidization events
   necessary to explain a dataset's observed 
   \green{{\em ploidy profile}. For a set $X$ of species that make up a dataset, we define such a profile to be 
   	the multiplicity
   	vector $(m_1,\ldots, m_n)$ for $n=|X|$, indexed by the species in $X$ where, for each $1\leq i\leq n$, the ploidy number 
   	of species $i\in X$ is $m_i\geq 1$.}
   
   As it turns out, an answer to this question is
    well-known if the ploidy profile in question
    is presented in terms of 
   a multi-labelled tree (see e.g. \cite{HM06,HOLM06,MHBOJ15,MJDBBBO12}). 
   Since it is, however, not always clear how
   to derive \blue{a biologically meaningful multi-labelled} \orange{tree} from the dataset in the first
   place \cite{HMSSS12}, we focus here on
  ploidy profiles for which such a 
  tree is not necessarily available. 
   
Due to the reticulate nature of the signal left 
behind by polyploidization \cite{JSO13,OTK17,BTWKW}, phylogenetic
 networks offer themselves 
 as a natural framework to formalize and answer our question.
 \blue{Although we present a definition of such structures (and
 	all other concepts used in this section) below, from an intuition development point of view, it suffices to observe at this stage that a phylogenetic network can sometimes be thought of as a rooted directed bifurcating tree $T$ with a pre-given set $X$ as leaves 
 	to which additional arcs have been added via joining subdivision vertices of arcs of $T$ so that
 	the following property holds. The resulting graph is a rooted directed acyclic graph with leaf set $X$ such that 
 	a subdivision vertex $v$ of $T$ either only has additional arcs
 	starting at it or only additional arcs ending at it. For our 
 	purposes we only allow the case that $v$ has one
 	additional outgoing arc. Subdivision vertices that have at least one additional incoming arc are called  {\em hybrid vertices} and are assumed to represent reticulate evolutionary
 	events such as polyploidization. If a hybrid vertex \blue{in a phylogenetic network $N$} also has 
 	overall degree three then $N$ is generally called a {\em binary} phylogenetic network. We refer the interested reader to
 	 Figure~\ref{fig:intro-fig}(i) for an example of a binary phylogenetic network on
 	 $X=\{x_1,x_2,x_3,x_4\}$ that is obtained from the tree depicted in Figure~\ref{fig:intro-fig}(ii)  and to
	\cite{G14,HM13,HRS10,S16}
for methodology and 
construction algorithms surrounding phylogenetic networks.} Note that 
to be able to account for 
autopolyploidization, we deviate from the usual notion of a phylogenetic network by allowing our \blue{phylogenetic} networks to
have parallel arcs \blue{(but no loops) -- see} e.g. \cite{HLM20,IJJMZ19} and the references therein for further results concerning such
networks. 

By taking for every leaf $x$ of a binary phylogenetic 
network $N$ on some finite set $X$ the number of directed 
paths from the root of $N$ to $x$, every phylogenetic network
induces a multiplicity vector $\vec{m}$ indexed by the elements in $X$. Saying that $N$ {\em realizes}
	$\vec{m}$ in this case (see Section~\ref{sec:structural} for \blue{an extension} of this concept \blue{to phylogenetic networks}) allows us to formalize our
 question as follows. Suppose $\vec{m}$ is a ploidy profile indexed by the elements of some finite 
set $X$. What can be said about the minimum number of hybrid vertices required by a binary phylogenetic network on
$X$ to realize $\vec{m}$? We \blue{call} this number which is central to the paper the {\em hybrid number} of $\vec{m}$ and denote it by $h(\vec{m})$. \blue{If a binary phylogenetic network $N$ has $h(\vec{m})$ hybrid vertices then we also say that $N$ {\em attains} 
$\vec{m}$} \blue{(see again Section~\ref{sec:structural} for an extension of this concept to phylogenetic networks)}. The interested reader is referred to \cite{S16} for an overview \blue{of} the related concept of the hybrid number of a set
of phylogenetic trees (i.e. leaf-labelled rooted trees without any vertices of
indegree and outdegree one \blue{whose leaf set is a pre-given set).} 

Before proceeding with presenting an example to help illustrate this question we remark that multiplicity vectors realized by binary phylogenetic networks have been used in \cite{CLRV08} to define a metric for \blue{a certain class of binary} phylogenetic networks.
	Furthermore, the stronger assumption that the 
	number of directed paths from {\em every} vertex of a \blue{binary} phylogenetic network $N$
	to every leaf of $N$ is known, \green{has led to the introduction of the concept of} an ancestral 
	profile for $N$ \cite{ESS19}.

\green{Returning to our question, consider} the ploidy profile $\vec{m}=(\green{12}, 6, 6, 5)$ indexed by $X=\{x_1,x_2,x_3,x_4\}$
where the multiplicity of $x_1$ is $12$, that of
$x_2$ and $x_3$ is 6, and that of $x_4$ is $5$. Since no \blue{binary} phylogenetic 
	network on one leaf and two hybrid vertices can realize the
	ploidy profile $\vec{m}'=(5)$ because \blue{it has at most} $2^2=4$ \blue{directed
	paths from the root to the leaf}, it follows that a \blue{binary}  phylogenetic network that
	realizes $\vec{m}'$ and therefore also $\vec{m}$ must have at least three hybrid vertices. In fact, the subnetwork \blue{$N'$} in bold of the phylogenetic network depicted
	in \blue{Figure~\ref{fig:intro-fig}(i)} is the unique  (subject to 
	letting the arc $a$ finish at a subdivision vertex of an outgoing or incoming arc of the hybrid vertex $h$ or letting $a$ start at a subdivision vertex of an outgoing or incoming arc of the vertex $t$) \blue{binary} 
	phylogenetic network that realizes $\vec{m}'$ and
	uses a minimum number of hybrid vertices. 
	To be able to realize the ploidy profile $(6,5)$ and therefore also the ploidy profile 
	$\vec{m}''=(6,6,5)$ at least four hybrid vertices are therefore needed. \blue{By counting directed paths from the root to each leaf of} the \blue{phylogenetic} network \blue{depicted} in \blue{Figure~\ref{fig:intro-fig}(i)} with $x_1$, the hybrid vertex $h'$ above $x_1$, the two incoming arcs of $h'$, and the arc
	$(h',x_1)$ removed and any resulting vertices of indegree and outdegree one suppressed \blue{clearly} realizes $\vec{m}''$.
	Calling that \blue{phylogenetic} network $N''$ then, in a similar sense as
	$N'$, we also have that $N''$ is unique. To obtain a \blue{binary phylogenetic} network from $N''$ that realizes $\vec{m}$ at least one
	further hybrid vertex is needed.
\blue{Again by counting directed paths from the root to each leaf}, it is easy to check that the \blue{binary phylogenetic network} $N(\vec{m})$ depicted in \blue{Figure~\ref{fig:intro-fig}(i)}
\begin{figure}[h]
	\centering
	\includegraphics[scale=0.25]{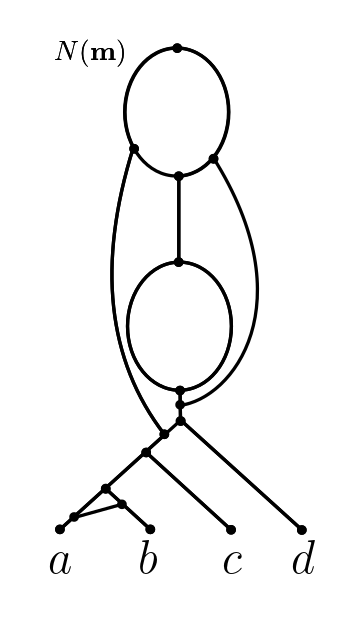}
	\caption{\label{fig:intro-fig}
		\blue{(i)} One of potentially many phylogenetic \green{networks} that realize the 
		ploidy profile $\vec{m}=(12, 6, 6, 5)$ 
		on $X=\{x_1,x_2,x_3,x_4\}$. 
		To improve clarity of exposition, we always assume that arcs are directed \green{downward}, away from the root. \blue{\orange{(ii)} A (phylogenetic) tree to which subdivision vertices and arcs have been added to obtain the phylogenetic network in (i) -- see the text for details.}
		}	
	\end{figure}
realizes $\vec{m}$ and 
postulates \green{five} hybrid  vertices. 
 As we shall see \orange{as a direct consequence
 of Theorem~\ref{theo:min}, $h(N)=5$.} \blue{As a further consequence of that theorem, we obtain a
 closed formula for the hybrid number of a ploidy profile (Corollary~\ref{cor:upper-bound}).}

The outline of the paper is as follows. 
\green{In the next section, we present}
some relevant basic terminology and 
notation concerning phylogenetic networks. This also
includes an unfold-operation for phylogenetic networks and a
fold-up \blue{operation} that generates phylogenetic networks, both of which  
were introduced originally in \cite{HM06}.  
In Section~\ref{sec:structural}, we \blue{extend the concept
of \orange{attainment} from binary phylogenetic networks to phylogenetic networks and}
study structural properties of phylogenetic networks that attain 
ploidy profile. As part of this, we
introduce the two main concepts of the paper:
	a simple ploidy profile and an attainment of a ploidy profile. In Section~\ref{sec:simple},
	 we associate two \blue{binary} phylogenetic networks
to a simple ploidy profile $\vec{m}$ which we denote by
\blue{$D(\vec{m})$ and $B(\vec{m})$}, respectively. 
As we shall see, the former is based on the prime factor decomposition
of a positive integer $m$
and the latter on a binary representation of $m$. 

In Section~\ref{sec:n(m)}, we associate \blue{a sequence $\sigma(\vec{m})$}
to a ploidy profile $\vec{m}$ 
\blue{which we call the simplification sequence of $\vec{m}$
\blue{(Algorithm~\ref{alg:simplification-seq})}. As part of this, we
also}  present some
basic results concerning such sequences. This includes
an infinite family of
ploidy profiles that shows that such a sequence
can grow exponentially large. Denoting the last element  
of \blue{the} simplification sequence for $\vec{m}$ by $\vec{m}_t$, we then employ 
a traceback through
$\sigma(\vec{m})$ to obtain \blue{the aforementioned binary phylogenetic} network $N(\vec{m})$
from  \blue{a binary phylogenetic network that attains} $\vec{m}_t$ (Algorithm~\ref{alg:Nmalgorithm}). 
Motivated by our partial results for \blue{binary phylogenetic networks that realize a}  simple ploidy profile \blue{summarized in Theorem~\ref{Bm-vertex-count}}, we provide 
an upper bound on the hybrid number $h(\vec{m})$  of
a ploidy profile $\vec{m}$ 
for special cases of $\vec{m}$ (Proposition~\ref{prop:upper-bound}).

After collecting some
preliminary results for $N(\vec{m})$ in Section~\ref{sec:n(m)},
we establish in Section~\ref{sec:optimal-n(m)}
that $N(\vec{m})$ \blue{attains $\vec{m}$}
for a large class of ploidy profiles $\vec{m}$
(Theorem~\ref{theo:min}). 
 In Section~\ref{sec:viola}, we turn our attention to computing the hybrid number 
 of the ploidy profile of \blue{a simplified version of the aforementioned} \orange{Viola} dataset \blue{from} \cite{MJDBBBO12}.
We conclude with Section~\ref{sec:discussion} where we outline potential directions of further research.



\section{Preliminaries}
\label{sec: preliminaries}
We start with introducing basic concepts surrounding 
phylogenetic networks. \green{Subsequent to this, we} briefly describe two 
basic operations concerning phylogenetic networks
\green{that are central for establishing a key result (Proposition~\ref{prop:key}).}
For the convenience of the reader, we illustrate \green{both operations} in 
Figures~\ref{fig:fun-construct-idea} and \ref{fig:fun-construct} 
by means of an example. Throughout the paper we assume that $X$ is a non-empty
finite set. We denote the size of $X$ by $n$. 

\subsection{Basic concepts}
Suppose for the following that $G$ is a rooted directed connected acyclic graph which might
contain parallel arcs \blue{but no loops}. Then we denote
the vertex set of $G$ by $V(G)$ and its set of arcs
by $A(G)$. We denote an arc $a\in A(G)$
starting at a vertex $u$ 
and ending in a vertex $v$ by $(u,v)$ and refer to 
$u$ as the {\em tail} of $a$ and to $v$ as the {\em head}
of $a$. We call an arc $a\in A(G)$ a {\em cut-arc} if the deletion
of $a$ \blue{disconnects $G$.} We call
a cut-arc $a$ of $G$ {\em trivial} if the head of $a$ is a leaf. 
\orange{Following \cite{IJJMZ19}, we call an induced subgraph of $G$ with
	two vertices $u$ and $v$ and two parallel arcs form $u$ to $v$ a {\em bead} of $G$.} 

Suppose $v\in V(G)$. Then  
we refer to the number of arcs coming into $v$
as the {\em indegree} of $v$, denoted by $indeg_G(v)$,
and the number of outgoing arcs of $v$ as the
{\em outdegree} of $v$, denoted by $outdeg_G(v)$.
If $G$ is clear from the context then we will
omit the subscript in $indeg_G(v)$ and
$outdeg_G(v)$, respectively. We call $v$ the {\em root} 
of $G$, denoted by $\rho_G$,
if $indeg(v)=0$,  and we call $v$ a {\em leaf} 
of $G$ if $indeg(v)=1$ and $outdeg(v)=0$. We denote 
the set of leaves of $G$ by $L(G)$.
We call $v$
a {\em tree vertex} if $outdeg(v)=2$ and $indeg(v)=1$.
And we call $v$ a {\em hybrid vertex} if $indeg(v)\geq2$ and $outdeg(v)=1$. 
We denote the set of hybrid vertices of $G$ by $H(G)$.
\green{We call any two leaves $x$ and $y$ of $G$
	a {\em cherry}, denoted by $\{x,y\}$, if $x$ and $y$ share a parent.}
We say that $G$ is {\em binary} if,
$outdeg(\rho_G)=2$ and, 
for all $v\in V(G)-L(G)$ other
than $\rho_G$, we have that the degree sum is three.
We say that a vertex $w\in V(G)$ is {\em above}
$v$ if there exists a directed path $P$ 
from $w$ to $v$. In that case, we also say 
that $v$ is {\em below} $w$. \green{If, in addition, $v\not=w$}  
	then we say that $w$ is {\em strictly above} $v$
and that $v$ is {\em strictly below} $w$. 

We call $G$ a {\em (phylogenetic) network (on $X$)}
if $L(G)=X$, every vertex 
$v\in V(G)-L(G)$ other than $\rho_G$ is a tree vertex
or a hybrid vertex and
\blue{$outdeg(\rho_G)= 2$}.  Note that phylogenetic networks in our sense were called semi-resolved
phylogenetic networks in \cite{HM06}. Also note that
our definition of a phylogenetic network differs 
from the standard definition of such an object (see e.g. \cite{S16}) by allowing
\blue{beads. To emphasise that a phylogenetic network has no
	beads, we will sometimes refer to it as a {\em beadless} phylogenetic network.}  

Suppose $G$ is a phylogenetic 
network on $X$. Then following \cite{BS07},
we define the {\em hybrid number} $h(G)$ of $G$
to be 
$$
h(G)=\sum_{h\in H(G)} (indeg(h)-1).
$$ 
We refer to a phylogenetic network $G$
(on $X$) as a {\em phylogenetic tree (on $X$)} if
$h(G)=0$. For a phylogenetic tree $T$ on $X$
and a non-root vertex 
$v\in V(T)$ we denote by $T(v)$ the subtree of $T$ obtained by deleting the incoming
arc of $v$ and \blue{the subsequently generated connected component that does not contain $v$.}

Suppose that $N$ is a 
phylogenetic network on $X$. Then we denote the number of
directed paths from the root $\rho_N$ of $N$ to a leaf $x$
of $N$ by $m_N(x)$. In case $N$ is clear from the context,
we will write $m(x)$ rather than  $m_N(x)$. For $N'$
a further phylogenetic network on $X$, we say that
$N$ and $N'$ are {\em equivalent} if there exists a
graph isomorphism between $N$ and $N'$ that is the
identity on $X$. Furthermore, we say that $N'$ is
a {\em \blue{(binary)} resolution} of $N$ if $N'$ is 
obtained from $N$ by resolving all vertices in $H(N)$ so that every 
vertex in $H(N')$ has indegree two. 
%
Note that for any resolution $N'$ of $N$, we have $h(N)=|H(N')|=h(N')$. 

\subsection{\green {The fold-up $F(U(N))$ of the unfold $U(N)$ of a phylogenetic network $N$}}
\label{section:fun}

Phylogenetic trees on $X$ were generalized in \cite{HM06} to  so called  {\em multi-labelled trees (on $X$)} or {\em MUL-trees (on $X$)}, for short, by replacing the leaf set of 
a phylogenetic tree  by a multiset $Y$ on $X$. Put differently, $X$ is the set obtained from $Y$ by ignoring \blue{the} multiplicities
of the elements in $Y$. As was pointed out in the same paper, 
every phylogenetic network $N$ gives rise to a 
\blue{MUL-tree} $U(N)$ on $X$ by recording, for every vertex $v$ of $N$, every directed 
path from the root $\rho_N$ of $N$ to $v$. More precisely, the vertex set
of $U(N)$ is, \green{for all vertices $v\in V(N)$,} the set of all directed paths $P$ from $\rho_N$ to $v$ where we identify $P$ with its end vertex $v$. Two
vertices $P$ and $P'$ in $U(N)$ 
are joined by an arc $(P',P)$ if there exists an arc $a\in A(N)$ such that $P$ is 
obtained from $P'$ by extending $P'$ by the arc $a$.
For example, the vertex $u$ in Figure~\ref{fig:fun-construct-idea}(i) 
	\begin{figure}[h]
		\centering
		\includegraphics[scale=0.33]{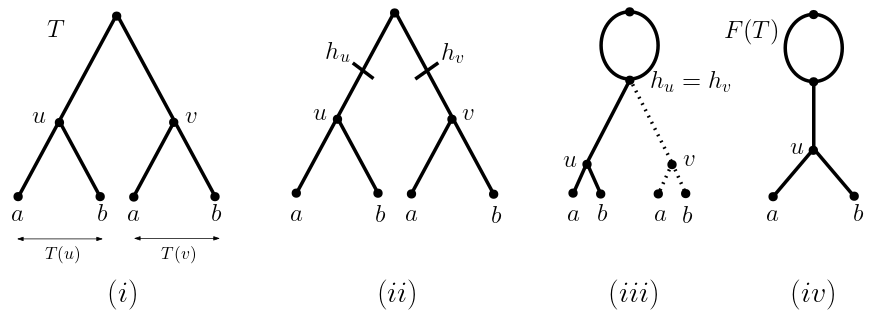}
		\caption{\label{fig:fun-construct-idea}
			(i) The MUL-tree $M$ obtained by unfolding the 
			phylogenetic network on $X=\{x,y\}$ in (iv). 
			\blue{The trees} $T(u)$ and $T(v)$ rooted at \blue{$u$ and $v$} and indicated with a double arrow, \blue{respectively,
			are equivalent. In fact, they are maximal inextendible.}
			(ii) Subdivision of the incoming arcs of $u$ and $v$ by $h_u$ and $h_v$, respectively. (iii) Identifying the vertices 
			$h_u$ and $h_v$. \blue{(iv)} Deleting the \green{subtree} $T(v)$
			and the incoming arc of $v$ (indicated by dotted lines \blue{in (iii)}). 
		}	
	\end{figure}
	is the directed path $\rho$, $s$, $u$ in the phylogenetic network in 
	Figure~\ref{fig:fun-construct-idea}(iv)
	which crosses the arc $a$. The vertex $v$ in Figure~\ref{fig:fun-construct-idea}(i) is the directed path $\rho$, $s$, $u$ in Figure~\ref{fig:fun-construct-idea}(iv) which crosses the arc $a'$.

\blue{Reading Figure~\ref{fig:fun-construct-idea} from left to right suggests that} the \blue{unfolding operation} can also be \orange{reversed.} \blue{We} next briefly outline \blue{this reversal operation which may be thought of as} the fold-up \blue{of} a MUL-tree $M$ into a phylogenetic network $F(M)$
(see \cite{HM06} for details, \cite{HMSW16,HS20}
for more on both constructions, and Figure~\ref{fig:fun-construct} for an example).
\blue{To make this more precise, we require further} terminology. 
Suppose that $M$ is a \blue{MUL-tree on $X$. Then we denote for
	 a non-root vertex $v$ of $M$ the parent of $v$ by $\overline{v}$.
	 Extending the relevant notions from phylogenetic trees to MUL-trees, 
	 we say that a subMUL-tree $T$ with root $u$ of $M$ is 
	{\em inextendible} if there exists a subMUL-tree $T'$ of $M$ with root vertex $w\not=u$
such that  $T$ and $T'$ are equivalent and either $\overline{v}=\overline{w}$
or $\overline{v}\not=\overline{w}$ and $T(\overline{v})$ and $T(\overline{w})$ 
are not equivalent. By definition, 
every subMUL-tree of $M$ that is equivalent with an
inextendible subMUL-tree of $M$ is necessarily also inextendible. In view of this,
we refer to an inextendible subMUL-tree $T$ of $M$ as {\em  maximal inextendible} if 
no subMUL-tree of $M$ that is equivalent with $T$ is a subMUL-tree of an inextendible subMUL-tree of $M$.	
So, for example, the subMUL-tree $T(u)$ of the MUL-tree $M$ \blue{depicted} in Figure~\ref{fig:fun-construct}(i) 
is \orange{inextendible} but the subMUL-tree $T(u')$ is not. In fact, $T(u)$ is maximal inextendible because the only equivalent copy of $T(u)$ in $M$ that is not $T(u)$
is $T(v)$ and neither $T(u)$ nor $T(v)$ is a subMUL-tree of an inextendible subMUL-tree in $M$. } 

\blue{To \orange{construct} $F(M)$, we first construct a sequence 
	$\gamma_M$ of subMUL-trees of $M$ which we call a {\em guide sequence} for 
	$F(M)$ and which we initialize with the empty sequence. 
	Let $T$ denote a maximal inextendible subMUL-tree of $M$. Let $u$ denote the root of $T$, and let  $\blue{U=U_u}\subseteq V(M)$ denote the
set of vertices $v\in V(M)$ such that the 
subMUL-tree rooted at $v$ is equivalent with} $T(u)$.  Note that, by definition, $|U|\geq 2$. Then, for all \blue{$v\in U$}, we first subdivide the incoming arc of \blue{$v$} by a vertex 
\blue{$h_v$ (cf Figure~\ref{fig:fun-construct-idea}(ii))} and then identify all
vertices \blue{$h_v$, $v\in U$,  with the  vertex $h_u$ (cf Figure~\ref{fig:fun-construct-idea}(iii)). By construction, $h_u$ clearly has $|U|$ incoming arcs} and also $|U|$ outgoing arcs. From these $|U|$ outgoing arcs of $h_u$, \blue{we} delete all but one arc and, for each
deleted arc $a$, \blue{we} remove the subMULtree $T(v)$ rooted at the head $v$ of $a$  \blue{(Figure~\ref{fig:fun-construct-idea}(iv))}.
We then grow $\gamma_M$ by 
adding an equivalent copy of $T(u)$ at the 
end of $\gamma_M$ in case $\gamma_M$ is not the empty sequence.
Otherwise we add $T(u)$ as the first element to 
$\gamma_M$. Replacing 
$M$ with the resulting graph $N_U$, we then find a new \blue{maximal inextendible
	\orange{subMUL-tree} in $N_U$} and proceed as before (where we canonically extend the notions of a \blue{maximal inextendible subMUL-tree and} of a subMUL-tree rooted at a vertex to $N_U$). In the case of the example in Figure~\ref{fig:fun-construct}, the next \blue{maximal inextendible subMUL-tree} in Figure~\ref{fig:fun-construct}(ii) is \blue{one of the leaves labelled $x_1$.}

\blue{By construction,} the process of subdividing \blue{(cf Figure~\ref{fig:fun-construct-idea}(ii))},
identifying \blue{(cf Figure~\ref{fig:fun-construct-idea}(iii)))}, and deleting \blue{(cf Figure~\ref{fig:fun-construct-idea}(iv))} 
terminates in a phylogenetic network on 
$X$. \blue{That network is $F(M)$}. We depict $F(M)$ in Figure~\ref{fig:fun-construct}(\blue{iv}) for the MUL-tree \blue{$M$} pictured in
Figure~\ref{fig:fun-construct}(i).
 
As was \blue{pointed out in \cite[Section 6]{HM06}}, $F(M)$ is
independent of the order in which ties are resolved when processing 
	\blue{maximal inextendible subMUL-trees}. Also, all tree vertices of $F(M)$
have outdegree two because $M$ is a binary MUL-tree. However,
$F(M)$ might contain hybrid vertices whose indegree is two or more since when processing a maximal inextendible subMUL-tree $T$  there might be more than two subMUL-trees in the graph generated thus far that are equivalent with $T$. \blue{Finally, $F(M)$ cannot contain arcs whose tail and head is a hybrid vertex
	because the hybrid vertices of $F(M)$ are in bijective correspondence with the
	elements in the guide sequence for $F(M)$.}

\begin{figure}[h]
	\centering
	\includegraphics[scale=0.43]{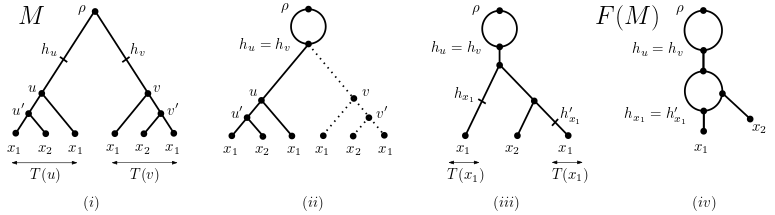}
\caption{\label{fig:fun-construct}
	(i) The MUL-tree $M$ obtained by unfolding the phylogenetic 
	network on $\{x_1,x_2\}$ pictured in (iv). The vertices $u$ and $v$ as indicated in (i) \blue{are} the 
	\blue{root of the maximal inexendible subtrees of $M$}
	 to which the subdivision, identification and deletion process described in \blue{Figure~\ref{fig:fun-construct-idea}} is applied to obtain the rooted directed acyclic graph \blue{$G$} presented in (iii).  The two leaves labelled $x_1$ in  \blue{$G$ are the roots of \blue{two equivalent} maximal inextendible subtree of $G$} and applying the subdivision,
	identification, and deletion process to it results in $F(M)$.
	In each case, the equivalent subMUL-trees are indicated by a double arrow.	
	}
\end{figure}

We conclude the outline of both constructions with the following remark. Suppose $N$ is a phylogenetic network on $X$. \blue{Then we call two tree vertices $u$ and $v$ 
	in $V(N)$ distinct an {\em identifiable pair} if the subMUL-trees of $U(N)$ rooted at 
	\blue{the vertex that is} a directed path
in $N$ from the root $\rho_N$ of $N$ to $u$ is equivalent with the subMUL-trees of $U(N)$ rooted at \blue{the vertex that is} a directed path
in $N$ from $\rho_N$  to $v$.} Let $C(N)$ denote the {\em compressed} phylogenetic network obtained from $N$  \blue{i.\,e.\,the phylogenetic network obtained from $N$ by contracting} all arcs $(u,v)$ for which both $u$ and $v$ is a hybrid vertex. Bearing in mind that the phylogenetic network $F(M)$
associated to a MUL-tree $M$ was denoted $\mathcal D(M)$ in \cite{HM06}, \blue{the following holds
	\begin{itemize}
		\item[(R1)] $F(U(N))$ does not contain an identifiable pair of vertices
		\cite[Theorem 3]{HM06}. 
		\item[(R2)] If $N$ and $N'$ are phylogenetic networks such
		that the MUL-trees $U(N)$ and $U(N')$ 
		are equivalent then $h(F(U(N)))\leq h(N')$ \cite[Corollary 2(ii)]{HM06}.
		\item[(R3)] If $N$ is a phylogenetic network that does not contain an
		identifiable pair of vertices then the compressed phylogenetic networks $C(F(U(N)))=F(U(N))$ and $C(N)$ are equivalent
		(Consequence of (R1) and \cite[Theorem 2]{HM06}).
		\end{itemize}
	}
	%

\section{Properties of phylogenetic networks that attain the hybrid number of a ploidy profile}
\label{sec:structural}

In this section, we collect structural properties of
phylogenetic networks that attain
the hybrid number of a ploidy profile.  
\green{For ease of readability, we will assume 
from now on that for a ploidy profile
$\vec{m}=(m_1,\ldots, m_n)$ on  $X$
the elements in $X$ are always ordered in
such a way that $m(x_i)=m_i$ holds for all $ 1\leq i\leq n$
and that $\vec{m}$ is in {\em descending order},
that is,  $m_i\geq m_{i+1}$ holds for all $1\leq i\leq n-1$.}

 We start with some notations and definitions. 
Suppose \blue{that} $N$ is a phylogenetic
network on $X=\{x_1,\ldots, x_n\}$
\blue{and that} $\vec{m}=(m_1,\ldots, m_n)$ 
	is a ploidy profile on $X$. Then we call $\vec{m}$
	{\em simple} if
	\blue{$m_i = 1$} for all $2\leq i\leq n$ (i.\,e.\, $m_1$ is 
	the only component of $\vec{m}$ that is at least \blue{$2$}).
	Moreover, we call $\vec{m}$
{\em strictly simple} if $\vec{m}$ is simple and 
	$|X|=1$. We
say that $N$ {\em realizes} a ploidy profile
$\vec{m}$ if 
 the elements in $X$ can be ordered in
such a way that $m_i=m(x_i)$ holds for all $1\leq i\leq n$.
\blue{In this case, we also call $N$ a {\em realization} of $\vec{m}$.
	Furthermore, we say that $N$ is a {\em binary} realization of
	$\vec{m}$ if $N$ is binary.	}
We say that $N$ {\em attains}  $\vec{m}$ if $N$ realizes
$\vec{m}$ and $ h(\vec{m})=h(N)=\sum_{h\in H(N)} (indeg(h)-1)$. In this case, we refer to $N$
as an {\em attainment} of $\vec{m}$. \blue{If $N$ is an attainment and also binary
	then we call $N$ a  {\em binary} attainment of $\vec{m}$.}

As is straight-forward to verify using the construction
	of the phylogenetic network indicated in Figure~\ref{fig:maxhybridmul}
	\blue{and the definition of $m(x)$, $x\in X$},
	every ploidy profile $\vec{m}=(m_1,\ldots, m_n)$ on $X=\{x_1,\ldots, x_n\}$ with $n\geq 1$ is realized by a phylogenetic network
	that contains at most $ \sum_{i=1}^n(m_i-1)$  hybrid vertices.
	Thus, the hybrid number of a ploidy profile always exists.
\begin{figure}[h]
	\centering
	\includegraphics[scale=0.3]{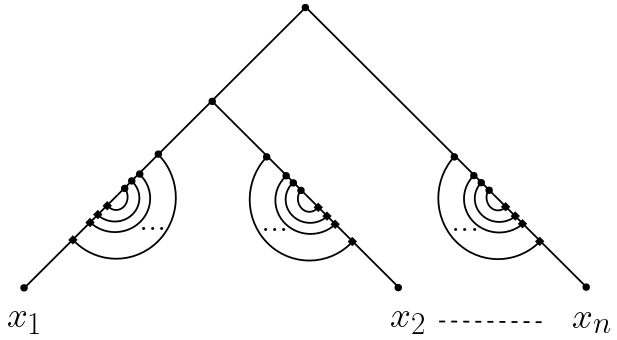}
	\caption{ \green{A phylogenetic network
		on $X=\{x_1,\ldots, x_n\}$ that realizes the ploidy profile
		$\vec{m}=(m_1,\ldots, m_n)$ on $X$. For all 
		$1\leq i\leq n$, \green{the number of curved lines}
		is $m_i-1$.}
		\label{fig:maxhybridmul}
	}
\end{figure}
As we shall see in Proposition~\ref{prop:upper-bound}, this bound can be improved for many ploidy profiles. 

To be able to collect some simple properties of attainments which we will do next, we require further terminology and notation. Suppose $N$ is a \blue{binary} phylogenetic network on $X$.
	Then we say that $N$ is {\em semi-stable} 
		if $N$ is equivalent \blue{to} a resolution of $F(U(N))$.
		 \blue{Motivated by the fact that a beadless phylogenetic
	network $N$ that is equivalent to} $F(U(N))$ \blue{was} called {\em stable} \blue{in} \cite{HMSW16}, we canonically extend this concept to our types of phylogenetic networks \blue{by saying that a phylogenetic network $N$ is {\em stable} if
$N$ is equivalent with $F(U(N))$}. 

\blue{For example, the binary phylogenetic network $N$ depicted in 
	Figure~\ref{fig:semibinaryforbid}(i) is semi-stable but not stable since $U(N)$
	is the MUL-tree depicted in  Figure~\ref{fig:semibinaryforbid}(ii) and $F(U(N))$
	is the phylogenetic network depicted in Figure~\ref{fig:semibinaryforbid}(iii). The
	phylogenetic network $N'$ pictured in Figure~\ref{fig:semibinaryforbid}(iv) is not
	semi-stable. In fact, for a binary phylogenetic network $N$ to be stable it cannot contain the phylogenetic network $N'$ pictured  in Figure~\ref{fig:semibinaryforbid}(iv) as an induced subgraph (where $x_1$  and $x_2$ need not be leaves in $N'$) since $F(U(N'))$ is the phylogenetic network depicted in Figure~\ref{fig:semibinaryforbid}(v).}
\begin{figure}[h]
	\centering
	\includegraphics[scale=0.34]{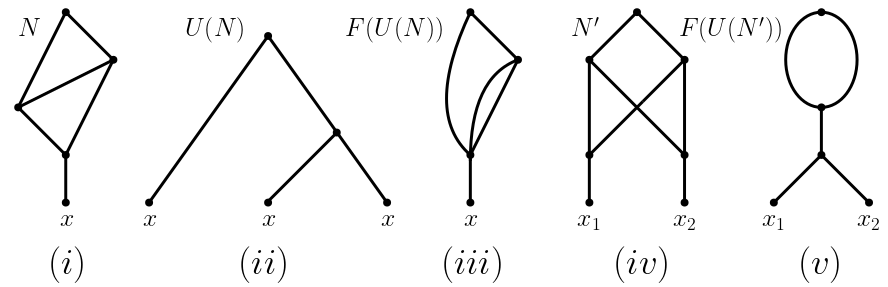}
	\caption{The phylogenetic network $N$ depicted in (i) is semi-stable but not stable since it is not equivalent with $F(U(N))$ i.\,e.\,the phylogenetic network depicted in (iii). the MUL-tree $U(N)$ is pictured in (ii). The phylogenetic network pictured in (iv) is not semi-stable. For a phylogenetic network to be stable it cannot contain the phylogenetic network $N'$ pictured in (iv) as an induced subgraph since  $F(U(N'))$.		 
		\label{fig:semibinaryforbid}
	}
\end{figure}
\blue{As we shall see below, certain types of binary phylogenetic networks called} beaded trees \blue{are examples of stable phylogenetic networks. Although introduced} in  \cite{IJJMZ19}
\blue{in the context of a study of binary phylogenetic networks whose root have indegree one and not zero as in our case, \orange{the main feature of beaded trees} is that 
	a hybrid vertex must be contained in a bead.  In view of this, we call a 
	binary phylogenetic network $N$ on $X$ a
	{\em beaded tree} if $N$ is either a phylogenetic tree on $X$ or
	every hybrid vertex is contained in a bead} (see e.\,g.\,\cite{HLM20} for more on such graphs). \blue{Then since a 
	beaded tree $N$ cannot contain an identifiable pair of vertices, it follows by (R3) that the compressed phylogenetic networks $C(N)$
	and $F(U(N))$ are equivalent. Since $N$ is a beaded tree and so does not contain arcs whose tail and head are hybrid vertices, it follows that $C(N)$ is in fact $N$. Thus, $N$ must be stable.}
	
Suppose $N$ is an attainment of a ploidy profile $\vec{m}$ on $X$ that contains a cut-arc $a$. Then deleting $a$ results in two connected components $N_1$ and $N_2$, one of which contains the root of $N$, say $N_1$, and the other is a phylogenetic network on $X-L(N_1)$. For $x\not\in L(N_1)$ we let $N_1^x$ denote the phylogenetic network on $L(N_1)\cup \{x\}$ obtained from $N_1$ by adding a pendant arc $a'$ to 
	$tail(a)$ and labelling the head of $a'$ by $x$.
	For any phylogenetic network $N$ on $X$, 
	we denote by $\vec{m}(N)$
	the ploidy profile on $X$ realized by $N$.

\begin{lemma} \label{lem:folding}
	Suppose that $N$ is an attainment of a ploidy profile $\vec{m}$ on $X$. Then the following holds. 
	\begin{enumerate}
		\item[(i)] $F(U(N))$ and any resolution of $F(U(N))$ is an attainment of $\vec{m}$. 
		\item[(ii)] $N$ is semi-stable.
		\item[(iii)] Suppose $N$ contains a cut-arc $a$ and $N_1$
		and $N_2$ are the connected components of $N$ obtained
		by deleting $a$. If $\rho_N\in V(N_1)$ and $x\not\in L(N_1)$ then $N_1^x$ is an attainment of $\vec{m}(N_1^x)$
		and $N_2$ is an attainment of $\vec{m}(N_2)$.
		\end{enumerate}
	\end{lemma}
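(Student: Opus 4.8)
The plan is to prove the three parts in sequence, using the fold-up machinery from Section~\ref{sec: preliminaries} and the defining property $h(\vec{m})=h(N)$ of an attainment.

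For part (i), the key observation is that $U(F(U(N)))$ is equivalent to $U(N)$: folding up and then unfolding returns the same MUL-tree, since the unfold of a phylogenetic network records exactly the root-to-vertex paths, and the fold-up operation is designed precisely to be a (partial) inverse of this on the level of MUL-trees. Granting this, apply (R2) with $N'=N$ to obtain $h(F(U(N)))\leq h(N)$. Since both networks realize the same ploidy profile $\vec{m}$ — because the number of root-to-leaf paths is an invariant of the MUL-tree $U(N)$, which $F(U(N))$ shares with $N$ — and since $N$ attains $\vec{m}$, we get $h(\vec{m})\le h(F(U(N)))\le h(N)=h(\vec{m})$, forcing equality throughout. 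Hence $F(U(N))$ is an attainment. For any resolution $N'$ of $F(U(N))$ we have $h(N')=h(F(U(N)))=h(\vec{m})$ by the remark on resolutions, and $N'$ realizes $\vec{m}$ as well (resolving hybrid vertices does not change root-to-leaf path counts), so $N'$ is also an attainment.

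For part (ii), I must exhibit $N$ as (equivalent to) a resolution of $F(U(N))$. Since $N$ is binary, every hybrid vertex of $N$ has indegree two, so $N$ is already ``resolved''. The content is that $N$ is recovered from $F(U(N))$ by resolving its (possibly higher-indegree) hybrid vertices in an appropriate order. This is essentially (R3): if $N$ contained no identifiable pair of vertices, then $C(F(U(N)))$ and $C(N)$ would be equivalent, and since $N$ is binary it has no arc between two hybrid vertices, so $C(N)=N$; combined with $C(F(U(N)))=F(U(N))$ this gives $N\simeq F(U(N))$, so $N$ is trivially a resolution of itself. The subtlety is the case where $N$ does contain an identifiable pair: then folding up genuinely merges hybrid vertices, and I would argue that any binary $N$ realizing $\vec m$ with an identifiable pair can be transformed, without increasing the hybrid number, into one where the identified tree vertices are pulled through a common hybrid vertex — i.e. that $N$ is a resolution of $F(U(N))$ by construction of the fold-up, which subdivides incoming arcs and re-expands. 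I expect this bookkeeping — matching the resolution of each multi-indegree hybrid vertex of $F(U(N))$ with the corresponding local structure of $N$ — to be the main obstacle and the place where care is needed; the cleanest route is probably to show directly from the definitions of $U$ and $F$ that $F(U(N))$ resolved according to the guide sequence for $N$ yields a network with the same unfold as $N$ and no identifiable pairs, then invoke (R3).

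For part (iii), the strategy is a cut-and-paste argument exploiting the independence of the two sides of a cut-arc. Let $a$ be the cut-arc, $N_1\ni\rho_N$, and $N_2$ the network on $X-L(N_1)$. First note $h(N)=h(N_1^x)+h(N_2)$: the hybrid vertices of $N$ partition into those lying in $N_1$ (which are exactly those of $N_1^x$, since the added pendant arc $a'$ creates no new hybrid vertex) and those in $N_2$, and the cut-arc $a$ itself is not incident to any shared hybrid structure. Also $h(\vec m)=h(\vec m(N_1^x))+h(\vec m(N_2))$ — here ``$\le$'' is clear since one can glue an attainment of each piece back along a cut-arc to realize $\vec m$ with hybrid number the sum, and ``$\ge$'' follows because, conversely, any realization of $\vec m$ can be cut at $a$ — actually the inequality I need is $h(\vec m)\ge h(\vec m(N_1^x))+h(\vec m(N_2))$, which I would get by taking an attainment of $\vec m$, locating a corresponding cut-arc (using part (i)/(ii) to pass to a semi-stable attainment if necessary so the cut-arc structure is preserved), and splitting. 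Then from $h(N_1^x)+h(N_2)=h(N)=h(\vec m)=h(\vec m(N_1^x))+h(\vec m(N_2))$ together with $h(N_1^x)\ge h(\vec m(N_1^x))$ and $h(N_2)\ge h(\vec m(N_2))$, both inequalities must be equalities, so $N_1^x$ and $N_2$ are attainments of their respective ploidy profiles. The only delicate point here is ensuring the decomposition $h(\vec m)=h(\vec m(N_1^x))+h(\vec m(N_2))$ genuinely holds as an equality of hybrid \emph{numbers} rather than just of hybrid-vertex counts of the particular network $N$; this is where I would lean on the sub-additivity/super-additivity coming from gluing along cut-arcs, which should be routine given the framework already set up.
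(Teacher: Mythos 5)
Parts (i) and (iii) of your proposal are essentially sound. Part (i) is the paper's own argument: (R2) gives $h(F(U(N)))\leq h(N)$, the fact that $F(U(N))$ realizes $\vec{m}$ gives the reverse inequality, and resolutions preserve hybrid number. In part (iii) you worry about the wrong inequality: the chain $h(\vec{m}(N_1^x))+h(\vec{m}(N_2))\leq h(N_1^x)+h(N_2)=h(N)=h(\vec{m})\leq h(\vec{m}(N_1^x))+h(\vec{m}(N_2))$ closes using only the ``gluing'' direction you call clear (replace the leaf $x$ of an attainment of $\vec{m}(N_1^x)$ by an attainment of $\vec{m}(N_2)$; the identity $m_N(y)=m_{N_1^x}(x)\times m_{N_2}(y)$, which is the fact the paper itself invokes, shows the glued network realizes $\vec{m}$). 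Your proposed route for the reverse direction --- cutting an arbitrary attainment of $\vec{m}$ at ``a corresponding cut-arc'' --- would not go through, since an arbitrary attainment need not contain such an arc; but you do not need that direction at all.

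The genuine gap is in part (ii), exactly at the point you flag as the main obstacle. Your fallback plan for the case that $N$ contains an identifiable pair --- arguing that $N$ is nevertheless a resolution of $F(U(N))$ --- cannot succeed: any resolution $D$ of $F(U(N))$ satisfies $h(D)=h(F(U(N)))$, whereas folding up a network containing an identifiable pair strictly decreases the hybrid number, so $h(F(U(N)))<h(N)$ would hold and $N$ could not have the same hybrid number as any resolution of $F(U(N))$. The missing idea is that this case simply cannot arise for an attainment: if $N$ contained an identifiable pair, then $h(F(U(N)))<h(N)$ would contradict part (i), which forces $h(F(U(N)))=h(N)$. Hence an attainment contains no identifiable pair, (R3) applies, $C(N)$ and $C(F(U(N)))=F(U(N))$ are equivalent, and $N$ is a resolution of $F(U(N))$, i.e.\ semi-stable. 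The correct argument is this one-line contradiction with part (i), not the guide-sequence bookkeeping you propose.
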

\begin{proof} 
(i): Clearly,
	$U(N)$ is the unfold of $N$ and also of
	$F(U(N))$. \blue{In view of  
	(R2)},
	we obtain $h(F(U(N)))\leq h(N)$. Since $N$ is a attainment of $\vec{m}$
	and $F(U(N))$ realizes $\vec{m}$ it follows that 
	\blue{$h(N)\leq h(F(U(N)))$ must hold too. Thus,}
	$h(F(U(N)))= h(N)$. \blue{Consequently,} $F(U(N))$ is an attainment of $\vec{m}$. To see the remainder, suppose for contradiction that $F(U(N))$ 
	has a resolution $D$ that is not an attainment of $\vec{m}$. Then $h(D)=h(F(U(N)))<h(D)$; a contradiction.
	
	(ii): \blue{Since $N$ is an attainment of $\vec{m}$ it cannot contain
		a pair of identifiable vertices as otherwise $h(F(U(N)))< h(N)$ would hold 
		which is impossible in view of Assertion~(i). By (R3) it follows
		that the compressed networks $C(N)$ and $C(F(U(N)))$ are equivalent.
		Hence $N$ must be a resolution of $F(U(N))$.}
	
 (iii): \blue{Since} $a$ is a cut-arc of $N$ \blue{and therefore cannot have a head
 that is a hybrid vertex,} we have $h(\vec{m})=h(\vec{m}(N_1^x))+ h(\vec{m}(N_2))$. \blue{Since every directed path from the root of $N$ to a leaf of $N_2$ must cross $a$ because $a$ is a cut-arc of $N$ it follows that} $m_N(y)=m_{N_1^x}(x)\times m_{N_2}(y)$
	holds for all $y\in L(N_2)$. \blue{This implies the statement.}
\end{proof}


The \blue{unfold and fold-up operations} described in Section~\ref{section:fun} lie at the heart of the proof of Proposition~\ref{prop:key}.

\begin{proposition}\label{prop:key}
	Suppose $\vec{m}$ is a ploidy profile 
	on $X=\{x_1,\ldots, x_n\}$ and that $N$ is an attainment of $\vec{m}$.
	Then there must exist a directed path $P$
	from the root \blue{of $F(U(N))$ to $x_1$} in $F(U(N))$ such that every hybrid vertex
	in $F(U(N))$ lies on $P$. If, in addition, $N$ 
	is stable then $P$ must be a \blue{directed} path in $N$.
	\end{proposition}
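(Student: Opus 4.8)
The plan is to push the entire statement onto the single network $N':=F(U(N))$. By Lemma~\ref{lem:folding}(i) this $N'$ is itself an attainment of $\vec m$; by (R1) it has no identifiable pair of vertices, and by the closing remark of Section~\ref{section:fun} it contains no arc both of whose endpoints are hybrid vertices, so the unique outgoing arc of each of its hybrid vertices is a cut-arc whose head is a tree vertex or a leaf. It therefore suffices to prove the following self-contained claim: \emph{if a realization $N'$ of a ploidy profile $\vec m=(m_1,\ldots,m_n)$ in descending order attains $h(\vec m)$, has no identifiable pair, and has no hybrid-to-hybrid arc, then all hybrid vertices of $N'$ lie on one directed path from $\rho_{N'}$ to $x_1$.} Granting the claim we obtain the asserted path $P$ in $F(U(N))$; and if, in addition, $N$ is stable, then $N$ is equivalent with $F(U(N))$ via a graph isomorphism that is the identity on $X$, and the image of $P$ under this isomorphism is a directed path in $N$ with the required property.

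I would prove the claim by induction on $n$. The base case $n=1$ concerns a strictly simple profile $(m_1)$; here a direct induction (using the cut-arc property to see that a lowest hybrid vertex has $x_1$ as its child, since the only phylogenetic network on $\{x_1\}$ is the single vertex $x_1$, and peeling such vertices off) shows that the hybrid vertices of any attainment of $(m_1)$ form a chain above $x_1$. For $n\ge 2$ we may assume $h(\vec m)>0$, so $m_1\ge2$ and, since two distinct directed paths from $\rho_{N'}$ to $x_1$ must reconverge at a vertex of indegree at least two, there is a hybrid vertex above $x_1$. Choose $h$ to be a topmost such vertex and let $a=(h,v)$ be its outgoing arc, which by the cut-arc property is a cut-arc. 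Deleting $a$ yields the root component $N_1$ and a phylogenetic network $N_2$ on a leaf set $Y$ with $x_1\in Y$. By Lemma~\ref{lem:folding}(iii) the networks $N_1^{x_1}$ and $N_2$ attain their own ploidy profiles; both, being obtained from $N'$ by deleting a cut-arc, still have no identifiable pair and no hybrid-to-hybrid arc, so the inductive hypothesis applies to each. From $m_{N'}(y)=m_{N_1^{x_1}}(x_1)\cdot m_{N_2}(y)$ for $y\in Y$ one reads off that $x_1$ retains maximum multiplicity within $N_2$, so the inductive hypothesis supplies a directed path from $v$ to $x_1$ in $N_2$ through all hybrid vertices of $N_2$; prefixing the arc $a$ extends it through $h$. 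Since the hybrid vertices of $N'$ split as those of $N_1^{x_1}$ together with those of $N_2$, it then remains to route the hybrid vertices of $N_1^{x_1}$ along a directed path from $\rho_{N'}$ to $h$ and concatenate — which is exactly what an application of the inductive hypothesis to $N_1^{x_1}$, with the newly attached pendant copy of $x_1$ as the target leaf, would give.

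The step I expect to be the real obstacle is making that last application legitimate, together with the base-case chain structure. The pendant copy of $x_1$ in $N_1^{x_1}$ carries multiplicity $m_1/m_{N_2}(x_1)$, which need not be the maximum multiplicity in $N_1^{x_1}$ when $m_{N_2}(x_1)\ge2$ and some leaf of $N_1$ ties $x_1$ for the maximum in $N'$; so the induction cannot simply be invoked at the pendant leaf. Clearing this hurdle comes down to showing that an attainment admits no hybrid vertex lying ``off the path to $x_1$'' — that is, one that is neither above $x_1$ nor comparable, under the ancestor relation, with every hybrid vertex that is above $x_1$. I would establish this by contradiction with the minimality of $h(N')$: such a configuration corresponds in the MUL-tree $U(N)$ to the duplication work for a maximum-multiplicity leaf being performed by two independent ``towers'', and re-routing $U(N)$ into a MUL-tree $M^{\ast}$ in which this work is shared instead — precisely the kind of manipulation the unfold and fold-up operations of Section~\ref{section:fun} are built for — produces a realization $F(M^{\ast})$ of $\vec m$ with $h(F(M^{\ast}))<h(\vec m)$, contradicting that $N$ attains $\vec m$. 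Once no such off-path hybrid vertex can exist, the hybrid vertices of $N'$ are totally ordered by the ancestor relation and all lie above $x_1$, whence $P$ is simply the directed path that visits them in order and then continues to $x_1$, and the induction closes.
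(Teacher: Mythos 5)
There is a genuine gap, and it sits exactly where you predicted it would. Your final paragraph concedes that the crux is to rule out a hybrid vertex lying ``off the path to $x_1$'' and says this should follow by re-routing $U(N)$ into a MUL-tree $M^{\ast}$ with $h(F(M^{\ast}))<h(\vec m)$ --- but that re-routing \emph{is} the entire proof of Proposition~\ref{prop:key} in the paper, and you have not carried it out. The paper makes it precise via the guide sequence $\gamma_{U(N)}:T_1,\ldots,T_l$ of $F(U(N))$: if the hybrid vertices do not all lie on one root-to-$x_1$ path, there is a smallest $i$ with $T_i$ not a subMUL-tree of $T_{i-1}$; one then detaches $t=\min\{t_1,t_i\}\ge 2$ equivalent copies of $T_i$ from $U(N)$ and reattaches them below subdivision vertices on the incoming arcs of $t$ copies of $T_1$, so that after folding the hybrid number drops by $t-1\ge 1$, contradicting attainment. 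Without this construction (or an equivalent one) together with the accounting $h(F(M))=h(F(U(N)))-(t-1)$, your argument reduces the proposition to an unproved claim of essentially the same strength. Everything before that point --- the reduction to $N'=F(U(N))$ via Lemma~\ref{lem:folding}(i), (R1), and the handling of the stable case --- is fine and matches the paper.

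Separately, the inductive scaffolding you build around the missing step rests on a false premise: the unique outgoing arc of a hybrid vertex of $F(U(N))$ need \emph{not} be a cut-arc. The absence of hybrid-to-hybrid arcs only forces its head to be a tree vertex or leaf of indegree one; it does not prevent the subgraph below that head from being joined to the rest of the network through a lower hybrid vertex with a parent elsewhere. This happens already in the attainments of $(5)$ and $(5,1)$ discussed around Figures~\ref{fig:intro-fig} and~\ref{fig:new}, where an arc such as $a$ bypasses intermediate hybrid vertices, so deleting the outgoing arc of the topmost hybrid vertex above $x_1$ does not disconnect the network. Consequently the decomposition into $N_1$ and $N_2$ at that vertex, and hence the appeal to Lemma~\ref{lem:folding}(iii) and the induction on $n$, is not available; the base case ``peeling'' argument for $n=1$ suffers from the same problem one level down. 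Since the paper's guide-sequence argument handles all $n$ uniformly and needs no cut-arc decomposition, I would recommend abandoning the induction on $n$ altogether and proving the re-routing step directly.
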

	
	\begin{proof}
		\blue{Put $\vec{m}=(m_1,\ldots, m_n)$.}
			Suppose for contradiction that there exists no  directed path from the root
			\blue{$\rho$ of $F(U(N))$} to $x_1$ in $F(U(N))$ that contains all hybrid vertices of $F(U(N))$. Then since $N$ is an attainment of $\vec{m}$, Lemma~\ref{lem:folding} implies that $F(U(N))$
			is also an attainment of $\vec{m}$. Consequently, $h(N)=h(F(U(N)))$.
			Let $\gamma_{U(N)}:T_1,T_2,\ldots, T_l$, some \blue{$l\geq 1$}, denote a \blue{guide} sequence for \blue{$F(U(N))$}. 
			\blue{Without loss of generality we may assume that $l\geq 2$ since otherwise $F(U(N))$
			only contains one hybrid vertex and, so, the proposition holds.}
			Then there must exist some $i\in\{2,\ldots, l\}$ such that $T_i$ is not a subMUL-tree of $T_{i-1}$ as otherwise all hybrid vertices of $F(U(N))$ would lie on a directed path from $\rho$ to $x_1$. Without loss of generality, we may assume that
			$i$ is as small as possible with this property, 
			i.\,e.\, $T_{j+1}$ is a subMUL-tree of $T_j$,
			for all $1\leq j\leq i-2$.
			
			Let $M$ denote the MUL-tree obtained from $U(N)$ as follows.
			For $j\in\{1,i\}$ let  $t_j$ denote the number of equivalent copies of $T_j$ in $U(N)$. Let $t=\min\{t_1,t_i\}$. \blue{ Then $t\geq 2$.}
			\blue{Choose} $t$ equivalent copies $R_1,\ldots, R_t$
			 of $T_i$ in $U(N)$. For all $1\leq j\leq t$,
			 delete the incoming arc of the root $r_j$ of $R_j$.
			Next choose $t$ equivalent copies of $T_1$ in $U(N)$
			and, for  all  $1\leq j\leq t$, subdivide the incoming arc of the root of $T_j$
			by a vertex $s_j$. Note that this is possible since $T_1$ is the first element in $\gamma_{U(N)}$ 	and so cannot be $U(N)$. 
			\blue{Last-but-not-least,} add the arcs $(s_j,r_j)$, for all $1\leq j\leq t$.
		\blue{Since this might have resulted in arcs whose head is not contained 
			in $X$ and also vertices that have indegree one and outdegree one,
			we clean the resulting MUL-tree by removing the former and repeatedly suppressing the latter. Also we repeatedly identify the root with its unique
			child if this has rendered it a vertex with outdegree one. }
			
			By construction, $F(M)$ is a phylogenetic network 
			that realizes $\vec{m}$. Furthermore,
			$h(F(M))=h(F(U(N)))-(t-1)=h(N)-(t-1)<h(N)$ must hold since $t\geq 2$; a contradiction as $N$ is an attainment of $\vec{m}$.
			
			\blue{The remainder of the proposition is an immediate consequence because $N$ and $F(U(N))$ are equivalent in this case.}
\end{proof}

Since, as mentioned above, beaded trees are
	stable phylogenetic networks the corresponding result for beaded trees in \cite[Lemma 13]{IJJMZ19} is
	a consequence of Proposition~\ref{prop:key} \blue{(once an incoming arc has
		been added to the root)}.

\begin{lemma}
	Suppose $\vec{m}=(m_1,\ldots, m_n)$ is a simple ploidy profile on $X$
	such that $m_1$ is a prime number. Then any cut-arc in an
	attainment of $\vec{m}$
	must be trivial.
	\end{lemma}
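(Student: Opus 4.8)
The plan is to argue by contradiction. Suppose $N$ is an attainment of $\vec{m}$ that contains a cut-arc $a=(u,v)$ which is \emph{not} trivial, so that $v=head(a)$ is not a leaf. As already noted in the proof of Lemma~\ref{lem:folding}(iii), the head of a cut-arc cannot be a hybrid vertex (a second incoming arc would keep $v$ reachable after deleting $a$), so $v$ is a tree vertex; in particular $v$ is the root of the ``lower'' piece of $N-a$. Write $N_1$ for the component of $N-a$ containing $\rho_N$ and $N_2$ for the other component. By Lemma~\ref{lem:folding}(iii), $N_1^x$ is an attainment of $\vec{m}(N_1^x)$, $N_2$ is an attainment of $\vec{m}(N_2)$, and $m_N(y)=m_{N_1^x}(x)\cdot m_{N_2}(y)$ for every $y\in L(N_2)$.

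Next I would combine the fact that $\vec{m}$ is simple (so every entry of $\vec{m}$ other than the one at $x_1$ equals $1$) with the primality of $m_1$. If $L(N_2)$ contains a leaf $y\neq x_1$, then $1=m_N(y)=m_{N_1^x}(x)\cdot m_{N_2}(y)$ forces $m_{N_1^x}(x)=1$ (and $m_{N_2}(y)=1$). If instead $x_1\in L(N_2)$, then $m_1=m_N(x_1)=m_{N_1^x}(x)\cdot m_{N_2}(x_1)$, and since $m_1$ is prime one of the two factors must be $1$. In every case this pins down one of the two pieces as a phylogenetic network on a single leaf in which that leaf has multiplicity $1$ (after first disposing, by a direct count, of the degenerate situations $L(N_1)=\emptyset$, where $u=tail(a)$ is forced to be a hybrid vertex so that $m_{N_1^x}(x)=m_{N_1^x}(u)\geq 2$, and $L(N_2)=\{x_1\}$).

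The main obstacle is to convert ``one piece is a phylogenetic network on a single leaf with that leaf of multiplicity $1$'' into a contradiction. The key observation is that no such network exists: it would have no hybrid vertex, since every hybrid vertex lies above the unique leaf and would force its multiplicity to be at least $2$, so it would be a phylogenetic tree on one leaf, which is impossible because its root has outdegree $2$ while the leaf has indegree $1$. Assembling the pieces — the multiplicativity of Lemma~\ref{lem:folding}(iii), the single-leaf impossibility just described, and the fact that a prime cannot be written as a product of two integers $\geq 2$ — then rules out a nontrivial cut-arc; any case that does not yet yield a contradiction reduces to an attainment of a simple ploidy profile with the same prime first entry but strictly fewer leaves, which can be handled by induction on $|X|$ with the one-leaf (strictly simple) case as the base. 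I expect the bookkeeping of these reductions and of when $m_{N_1^x}(x)$ versus $m_{N_2}(x_1)$ is forced to equal $1$ to be the only genuinely fiddly part of the write-up.
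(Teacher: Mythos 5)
Your engine is the right one and essentially matches the paper's: cut along the putative non-trivial cut-arc, use the multiplicativity $m_N(y)=m_{N_1^x}(x)\cdot m_{N_2}(y)$ of Lemma~\ref{lem:folding}(iii), note that a phylogenetic network on a single leaf forces that leaf to have multiplicity at least two (its root has outdegree two and both outgoing arcs must lead to the unique leaf), and let primality do the rest. But there is a genuine gap in the case you defer to ``induction on $|X|$'', namely $m_{N_1^x}(x)=1$ with $x_1\in L(N_2)$. The inductive hypothesis applied to the smaller attainment $N_2$ only tells you that every cut-arc \emph{of $N_2$} is trivial; the arc you need to rule out is $a$ itself, which is an arc of $N$ and not of $N_2$, so the induction never closes. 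Worse, this case cannot be closed: take $\vec{m}=(5,1)$, let $B$ be a binary attainment of $(5)$ (three hybrid vertices), and form $N$ by adding a new root $\rho$ with one arc to $x_2$ and one arc to the root of $B$. Then $N$ realizes $(5,1)$ with three hybrid vertices, and since $h((5,1))\geq h((5))=3$ it is an attainment; yet the arc from $\rho$ to the root of $B$ is a non-trivial cut-arc, and it sits exactly in your unresolved case $m_{N_1^x}(x)=1$.

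What the paper actually proves, and what you need to prove, is the statement for the \emph{pruned} network: it first deletes every leaf $x_i$ with $2\leq i\leq n$ together with its pendant arc (suppressing the resulting degree-two vertices and, if necessary, the root), obtaining an attainment $N'$ of the strictly simple profile $(m_1)$ with $h(N')=h(N)$, and only then cuts along a non-trivial cut-arc of $N'$. In that single-leaf setting both components of the cut are phylogenetic networks on one leaf, so \emph{both} factors in the product $m_1=m_{N_1^y}(y)\times m_{N_2}(x_1)$ are at least $2$, and primality of $m_1$ gives the contradiction immediately --- no case analysis on where $x_1$ lands and no induction are needed. You should either add this reduction at the start of your argument or restate the claim so that it refers to $N'$; as literally stated for $N$ itself, the dichotomy you set up has a hole that cannot be filled.
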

	\begin{proof}	
		Suppose $N$ is an attainment of $\vec{m}$. Then 
	\green{the phylogenetic network $N'$ obtained from $N$ by removing, 
for all $2\leq i\leq n$, the cut arcs ending in a leaf $x_i$ of $N$ as well as the leaves $x_i$ (suppressing the resulting vertices of indegree one and outdegree one \orange{and also the root in case this has rendered it an outdegree one vertex}) is a phylogenetic network on $X'=\{x_1\}$. Note that since 
		none of the elements $x_i$
		indexing $m_i$, $2\leq i\leq n$, contributes to $h(N)$,
		we have $h(N)=h(N')$. Thus, $N'$ is an attainment of
		the ploidy profile $\vec{m_1}=(m_1)$.}
		 Put $m=m_1$ and $x=x_1$. If $m\in \{2,3\}$ then the lemma clearly holds since the only cut arc of $N'$ is the incoming arc of $x_1$  and therefore is trivial. So assume that $m\geq 4$.
		 
		 Assume for contradiction that $N'$ has a 
		 non-trivial cut-arc $a$.
		 Let $N_1$ and $N_2$ denote the connected components \blue{of $N'$} obtained by deleting $a$. Assume 
		 	without loss of generality that \blue{the root of $N'$} is contained in  $V(N_1)$. Let $y\not\in L(N_1)$.
		 	Then since for all leaves $z$ in a phylogenetic
		 	network $M$ the number of directed paths from the root of $M$ to $z$ is $m_M(z)$ it follows that  $m=m_{N'}(x)=m_{N_1^y}(y)\times m_{N_2}(x)$. Since
		 $1\not\in\{m_{N_1^y}(y), m_{N_2}(x)\}$ and $m$ is prime this is
		 impossible.
		\end{proof}

\section{\blue{Realizing} simple ploidy profiles}
\label{sec:simple}

\blue{We start this section with associating to a simple ploidy profile} $\vec{m}$  a \blue{binary phylogenetic network $D(\vec{m})$}
	that is based 
on the prime factor decomposition of $m_1$ and also a 
\blue{binary phylogenetic network $B(\vec{m})$}
 that is based on the unique bitwise representation of $m_1$. As we shall see,
\blue{other ways to define binary realizations of $\vec{m}$ that are based on the
	prime factor decomposition of $m_1$ or on the bitwise representation of $m_1$
	and that are similar in spirit to the definitions of $D(\vec{m})$ and $B(\vec{m})$ are conceivable.
	Furthermore,  the ploidy profiles considered in Figure~\ref{table4.2}
	suggest that
the relationship between the number of hybrid vertices in $D(\vec{m})$
and in $B(\vec{m})$ is not straight forward.}

\blue{Suppose that $\vec{m}=(m_1,\ldots, m_n)$, $n\geq 1$, is a ploidy profiles
on $X=\{x_1,\ldots, x_n\}$.}
\begin{figure}[ht] 
	\centering 
	\includegraphics[scale=0.6]{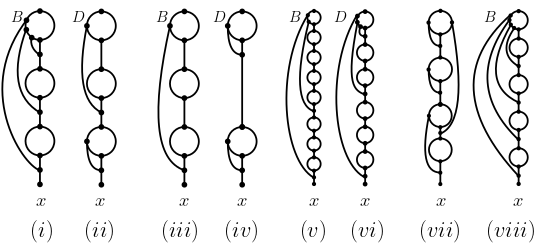}
	
	\caption{For a strictly simple ploidy profile $\vec{m}$ we depict in (i), (iii), (v) and {\blue(viii)} \blue{the phylogenetic network $B=B(\vec{m})$} and in (ii), (iv), and (vi) \blue{the phylogenetic network $D=D(\vec{m})$}. (i) and (ii): $\vec{m}=(15)$ and $h(B)=6>5=h(D)$; (iii) and (iv):
		$\vec{m}=(9)$ and $h(B)=4=h(D)$;
		(v) and (vi):
		$\vec{m}=(265)$ and $h(B)=10<11=h(D)$. \blue{(vii) A realization of the ploidy profile $\vec{m}=(47)$ that uses eight hybrid vertices. (viii) The realization of the ploidy profile 
		in (vii) in terms of $B(\vec{m})$.}
	\label{table4.2}
}
\end{figure}


	

\subsection{The phylogenetic network $D(\vec{m})$}
We begin with introducing further
terminology. Suppose that $m$ is a positive integer and 
that, for all $1\leq i\leq k$, $p_i$ is a prime and
$\alpha_i\geq 1$ is an integer such that 
$p_1^{\alpha_1}p_2^{\alpha_2}\cdot \ldots \cdot p_k^{\alpha_k}$ is 
a prime factor decomposition  of $m$. Without loss of generality, we may 
assume throughout the \blue{remainder} of the paper that 
	the primes $p_1,\ldots, p_k$ are indexed in such a way 
	that \blue{$p_i>p_{i+1}$}
	holds for all $1\leq i\leq k-1$.

For all $1\leq i\leq k$, let $\vec{p}_i=(p_i)$ denote the strictly simple ploidy profile on 	$Y=\{x_1\}$. Also let 
	$\mathcal A(\vec{p}_i)$ denote 
	a binary phylogenetic network on $Y$ that attains
	\blue{$\vec{p}_i$}. Note that $\mathcal A(\vec{p}_i)$ need not be unique. 
		For all $1\leq i\leq k$, we then define a \blue{binary} phylogenetic network $\mathcal A(\vec{p}_i)^{\alpha_i}$ on $Y$ as follows:
	
\subsubsection{\blue{The phylogenetic network $\mathcal A(\vec{p}_i)^{\alpha_i}$} }
	 We take the root $\rho_i$ of $\mathcal A(\vec{p}_i)$ to be the root of 
	$\mathcal A(\vec{p}_i)^{\alpha_i}$.
If $\alpha_i=1$ then we take 
$\green{\mathcal A(\vec{p}_i)^{\alpha_i}}$ to be $\mathcal A(\vec{p}_i)$.
If $\alpha_i\geq 2$ then we make $\alpha_i$ equivalent
copies of $\mathcal A(\vec{p}_i)$ and order them  in some way. Next, we 
identify the unique leaf of the first of the $\alpha_i$ copies of $\mathcal A(\vec{p}_i)$ under that ordering
with the root of the second copy of $\mathcal A(\vec{p}_i)$ and so on until we have processed 
all $\alpha_i$ copies of $\mathcal A(\vec{p}_i)$ this way. The resulting directed acyclic graph is
	$\mathcal A(\vec{p}_i)^{\alpha_i}$ in this case.

To illustrate this construction, assume that $m=4$. 
	Then $k=1$, $p_1=2=\alpha_1$, and $Y=\{x_1\}$. Furthermore, \green{the phylogenetic network depicted in \blue{Figure~\ref{fig:fun-construct}(iv)}
	with the leaf $x_2$ and its incoming arc removed, and the resulting vertex of indegree and outdegree one suppressed, is $\mathcal A(\vec{p}_1)^{\alpha_1}$. }

\subsubsection{\blue{From $\mathcal A(\vec{p}_i)^{\alpha_i}$ to  $D(\vec{m})$ in case $\vec{m}$ is strictly simple}}
\blue{Suppose $\vec{m}$ is strictly simple. Then we obtain $D(\vec{m})$ by} `stacking' the networks $\mathcal A(\vec{p}_1)^{\alpha_1},\ldots, \mathcal A(\vec{p}_k)^{\alpha_k}$
	obtained as described above for a prime factor decomposition $p_1^{\alpha_1}p_2^{\alpha_2}\cdot\ldots\cdot p_k^{\alpha_k}$
	of $m=m_1$ and a choice of attainment $\mathcal A(\vec{p}_i)$ of $\vec{p}_i=(p_i)$, for all $1\leq i\leq k$. 
If $k=1$ then $D(\vec{m})$ is $\mathcal A(\vec{p}_1)^{\alpha_1}$. So assume $k\geq 2$. Then we define \blue{$D(\vec{m})$ to be the phylogenetic
	 network on $\{x_1\}$} obtained by identifying, for all
$1\leq i\leq k-1$, the 
unique leaf of \green{$\mathcal A(\vec{p}_i)^{\alpha_i}$} with the root of
$\mathcal A(\vec{p}_{i+1})^{\alpha_{i+1}}$. 

\blue{For the convenience of the
reader,  we depict $D(\vec{m})$ for the strictly simple ploidy profile
	$\vec{m}=(9)$  on $\{x\}$ in  Figure~\ref{table4.2}(iv).} 

\subsubsection{\blue{From $\mathcal A(\vec{p}_i)^{\alpha_i}$ to $D(\vec{m})$ in case $\vec{m}$ is not strictly simple}}
For all primes $p$ in the prime factor
	decomposition of $m_1$, choose a \blue{binary} attainment $\mathcal A(\vec{p})$ 
	\blue{of the strictly simple ploidy profile $\vec{p}=(p)$}
	and  construct the network $D(\vec{m'})$ for the strictly simple ploidy profile $\vec{m'}=(m_1)$ \blue{as described above}. That network we then process further as follows. First, \blue{we choose an outgoing arc $a$
		of the root of $D(\vec{m'})$ and subdivide \blue{it} with $n-1$ subdivision
		vertices $s_2, \ldots, s_n$ where, starting at the tail of $a$, the first
		subdivision vertex is $s_2$, the next is $s_3$, and so on. }
	To the vertices $s_i$, $2\leq i\leq n$  we then add
the arcs $(s_i,x_i)$  to obtain \blue{$D(\vec{m}$)}.


\blue{As an immediate consequence of the} construction of \blue{$D(\vec{m}) $,
	we have  that $D(\vec{m})$ does not contain an identifiable pair
	of vertices. In view of (R1) 
	it follows that $D(\vec{m})$ is semi-stable. In summary, we therefore have
	the following result.}

\begin{lemma}\label{lem:key}
	Suppose \blue{$\vec{m}$} is a simple ploidy profile on $X$. 
	\blue{Then $D(\vec{m})$} is a binary, semi-stable phylogenetic network on $X$ that realizes $\vec{m}$.
\end{lemma}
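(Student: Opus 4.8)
The plan is to verify the three asserted properties of $D(\vec{m})$ in turn: that it is binary, that it realizes $\vec{m}$, and that it is semi-stable, relying on the explicit construction carried out just above the statement. The binary property and realization are essentially bookkeeping, so the bulk of the argument is really the semi-stability claim, which the text has already reduced to showing that $D(\vec{m})$ contains no identifiable pair of vertices and then invoking (R1).

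First I would treat the case $\vec{m}$ strictly simple, i.e.\ $X=\{x_1\}$ and $\vec{m}=(m_1)$. Here $D(\vec{m})$ is built by stacking the networks $\mathcal A(\vec p_i)^{\alpha_i}$, each of which is in turn a stack of $\alpha_i$ copies of a binary attainment $\mathcal A(\vec p_i)$ of $(p_i)$. Since each $\mathcal A(\vec p_i)$ is binary with a single leaf, identifying the unique leaf of one copy with the root of the next yields a binary directed acyclic graph with one leaf (the identification vertex has indegree one and outdegree one in the naive union, but it is the root of the next copy which has outdegree two, so after the identification it is a legitimate tree vertex), and likewise for the stacking across the $p_i$. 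So $D(\vec{m})$ is a binary phylogenetic network on $\{x_1\}$. For realization, I would observe that the number of directed root-to-$x_1$ paths multiplies along a stack: if $N$ is obtained by identifying the leaf of $N'$ with the root of $N''$ then $m_N(x_1)=m_{N'}(x_1)\cdot m_{N''}(x_1)$, because every root-to-leaf path in $N$ splits uniquely at the gluing vertex. Applying this repeatedly gives $m_{D(\vec{m})}(x_1)=\prod_i p_i^{\alpha_i}=m_1$, so $D(\vec{m})$ realizes $(m_1)$. In the non-strictly-simple case, $D(\vec{m})$ is obtained from $D((m_1))$ by subdividing an outgoing arc of the root with $s_2,\dots,s_n$ and attaching trivial pendant arcs $(s_i,x_i)$; this keeps the graph binary (each $s_i$ becomes a tree vertex) and does not change $m(x_1)$ while giving $m(x_i)=1$ for $i\ge 2$, so it realizes $\vec{m}$.

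For semi-stability, the key step is to argue that $D(\vec{m})$ has no identifiable pair of tree vertices, which by (R1) (applied to $N:=$ a beadless model, or more directly by the text's own reduction) forces $D(\vec{m})$ to be semi-stable. The text claims this is ``an immediate consequence of the construction,'' so the expected argument is structural: in a stack, the subtree of $U(D(\vec{m}))$ rooted at (the directed path to) a given tree vertex $v$ is determined by the portion of the stack lying below $v$, and two distinct tree vertices sit at different heights in the stack or in incomparable copies, so these subMUL-trees differ — in particular their leaf counts (the values $m$ below them) are distinct prime-power-type products, and the pendant $x_i$'s for $i\ge 2$ hang off unique subdivision vertices. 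I would make this precise by noting that for $v$ strictly above $w$ on the stacked path, the subMUL-tree at $v$ properly contains that at $w$, hence has strictly more leaves, so they cannot be equivalent; and any two tree vertices of $D(\vec{m})$ are comparable (they all lie on the single stacked trunk, since the branching off that trunk only goes to the copies of the $\mathcal A(\vec p_i)$, each of which is itself just a finer stack).

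The main obstacle I anticipate is precisely pinning down the ``all tree vertices are essentially on one trunk and hence comparable'' claim cleanly, because $\mathcal A(\vec p_i)$ is an arbitrary binary attainment of $(p_i)$ and could have internal tree vertices that are not all comparable to one another. So I would not literally claim comparability of all tree vertices; instead I would argue that an identifiable pair in $D(\vec{m})$ would have to come from an identifiable pair within a single $\mathcal A(\vec p_i)$ (since the stacked copies of $\mathcal A(\vec p_i)$ are, apart from their shared gluing vertices, internally disjoint and sit at multiplicatively distinct levels, so a cross-copy equivalence is ruled out by leaf counts), and then observe that $\mathcal A(\vec p_i)$, being an \emph{attainment} of the prime $p_i$, cannot itself contain an identifiable pair — for if it did, folding up would by (R2) produce a realization of $(p_i)$ with strictly fewer hybrid vertices, contradicting attainment. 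This last point mirrors the proof of Lemma~\ref{lem:folding}(ii) and is the one genuinely non-formal ingredient; everything else is the multiplicativity of path counts and careful case-checking that the subdivision step adding the $x_i$'s introduces no new identifiable pair.
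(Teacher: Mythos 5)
Your proposal is correct and follows essentially the same route as the paper, which proves the lemma in a single sentence by asserting that the construction of $D(\vec{m})$ yields no identifiable pair of vertices and then invoking (R1); your argument simply supplies the details (multiplicativity of path counts for realization, and the split into cross-copy versus within-copy pairs, the latter handled by the attainment property of $\mathcal A(\vec{p}_i)$ exactly as in Lemma~\ref{lem:folding}(ii)) that the paper leaves implicit.
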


\blue{Note that as the strictly simple ploidy profile 
	$\vec{m}=(m)$ with $m=265$ shows, the phylogenetic network
depicted  in Figure~\ref{table4.2}(v) uses fewer hybrid vertices to attain $\vec{m}$
than the phylogenetic network $D(\vec{m})$ depicted in Figure~\ref{table4.2}(vi). Thus, an attainment of a simple ploidy profile $\vec{m}$ need not be obtained from a prime factor decomposition of 
the first component of $\vec{m}$.}

\blue{For the remainder of this section, assume again that $\vec{m}=(m_1,\ldots, m_n)$,
$n\geq 1$ is a simple ploidy profile on $X=\{x_1,\ldots, x_n\}$.}

\subsection{The phylogenetic network $B(\vec{m})$}
\label{sec:B(M)-def}

We start with associating two vectors to a positive
	integer $m$ which we call  the {\em bitwise representation (of $m$)} and the {\em binary representation (of $m$)}, \blue{respectively}.	For $m$ a positive integer, the  first is the 0-1 vector	$\vec{v}_m= (v_m^f,\ldots, v_m^1, v_m^0)$ such that $m=\sum_{i=0}^f 2^i v_m^i$.
	For ease of presentation, and unless stated
	otherwise, we denote by $v_m^f$ the most significant bit that is one. The second is
the vector $(i_1,\ldots, i_q)$, $q\geq 1$ and $i_j\not=0$, for all $1\leq j\leq q-1$,
such that  $m=\sum_{j=1}^q 2^{i_j}$ holds. Informally speaking, the $j$-th entry of that vector is the exponent
of the term $2^{i_j}$ in the bitwise representation of $m$. \blue{Note that $2^{i_1}$ indexes the component $v_m^f$ of  $\vec{v}_m $.}
For example, the bitwise representation of $m=11$ is $(1,0,1,1)$ and the binary representation of $m$ is $(3,1,0)$.


\subsubsection{ \blue{The phylogenetic network $B(\vec{m})$ in case $\vec{m}$ is strictly simple}}
\blue{ Then} $\vec{m}=(m_1)$ and $X=\{x_1\}$. \blue{Let $B(q)$}
	denote \blue{the} beaded tree with unique leaf \blue{$x_1$ and} 
	$q\geq 0$ hybrid vertices.
Let $(i_1,\ldots,i_q)$ denote the binary representation
of $m_1$. \blue{Then $B(\vec{m})$} is obtained from \blue{the beaded tree} $B(i_1)$
as follows. \blue{Choose one the two outgoing arcs of the
	root of $B(i_1)$ and subdivide it with $q-1$ vertices}
$s_2,\ldots, s_q$ not contained in $B(i_1)$ \blue{so that $s_2$ is the child of the
	root of $B(i_1)$, $s_3$ is the child of $s_2$, and so on.}
For all $1\leq j\leq q$, we then add an arc
$a_j$ to $s_j$ whose head is a subdivision vertex  of
the outgoing arc of the hybrid vertex of $B(i_1)$ that has precisely 
$i_j$ hybridization vertices of $B(i_1)$ strictly below it.  

We refer the interested reader to Figure~\ref{table4.2}\blue{(iii)} for an illustration of \blue{$ B(\vec{m})$ for the strictly simple ploidy profile $\vec{m}=(9)$}.

\subsubsection{ \blue{The phylogenetic network $B(\vec{m})$ in case $\vec{m}$ is not strictly simple}}
We first construct \blue{the} phylogenetic network $B(\vec{m'})$ for the strictly simple ploidy profile $\vec{m'} =(m_1)$ on $\{x_1\}$. \blue{Next, we choose one of the two
outgoing arcs \orange{of the root} of $B(\vec{m'})$ and subdivide that arc with $n-1$ 
subdivision vertices $t_2,\ldots, t_n$ such that $t_2$ is the child of the
root of $B(\vec{m'})$, $t_3$ is the child of $t_2$ and so on.}
 Finally, we attach to each $t_i$ the
arc $(t_i,x_i)$, $2\leq i\leq n$. 

To illustrate this construction,
consider the simple ploidy profile $\vec{m}_1=(5,1)$
on  \blue{$X'=\{x_1,x_2\}$. Then 
$\vec{m'}=(5)$ and the phylogenetic network $D$ depicted in 
Figure~\ref{fig:new} is 
$B(\vec{m})$. In fact, $B(\vec{m})$ is a \blue{binary} attainment of $\vec{m}$.} 

As indicated in Figure~\ref{table4.2}, the relationship between \blue{$D(\vec{m})$,
 $ B(\vec{m})$}, and a \blue{binary} attainment of a simple ploidy profile $\vec{m}$ is far from clear in general. This holds even if
$\vec{m}=(m)$ is strictly simple and $m$ is a prime. Indeed
for $m=47$ the hybrid number of $\vec{m}$ is at most eight
\blue{since the phylogenetic network depicted in Figure~\ref{table4.2}(vi)
	realizes $\vec{m}$.}
However  \blue{$h(B(\vec{m}))=9$}. This implies that, in general, 
\blue{$ B(\vec{m})$ with $\vec{m}=(p)$} and $p$ a prime cannot be
used as an attainment with which to initialize the construction of \blue{$D(\vec{m})$.}

As an immediate consequence of the construction of \blue{$B(\vec{m})$,
we have the following companion result of Lemma~\ref{lem:key} since 
	similar arguments as in the case of $D(\vec{m})$ imply that
	$B(\vec{m})$ is semi-stable}. 

\begin{lemma}\label{lem:length-2}
	Suppose \blue{$\vec{m}$} is a simple ploidy profile on $X$. \blue{Then $B(\vec{m})$}
	is a binary, semi-stable phylogenetic network on $X$ that
	 realizes $\vec{m}$.
\end{lemma}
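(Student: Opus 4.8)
The plan is to mirror, almost verbatim, the argument already sketched for Lemma~\ref{lem:key}, since the construction of $B(\vec{m})$ is structurally parallel to that of $D(\vec{m})$. First I would verify that $B(\vec{m})$ realizes $\vec{m}$, i.e. that the number of directed root-to-leaf paths is $m_i$ for each $i$. The key subcase is the strictly simple one: for $B(\vec m)$ with $\vec m=(m_1)$ built from the beaded tree $B(i_1)$ and the subdivision vertices $s_2,\dots,s_q$ carrying arcs $a_j$ landing below exactly $i_j$ hybrid vertices, one counts paths from the root to $x_1$. Each arc $a_j$ contributes, together with the bead stack below its head, a collection of $2^{i_j}$ directed paths, and the ``trunk'' through all the beads contributes $2^{i_1}$; since the binary representation of $m_1$ is $(i_1,\dots,i_q)$ we have $\sum_{j=1}^q 2^{i_j}=m_1$, so $m_{B(\vec m)}(x_1)=m_1$. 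In the non-strictly-simple case the extra subdivision vertices $t_2,\dots,t_n$ sit on an arc leaving the root, and the pendant arcs $(t_i,x_i)$ give $m_{B(\vec m)}(x_i)=1$ for $2\le i\le n$ while leaving the count for $x_1$ unchanged; this handles realization.

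Next I would check that $B(\vec{m})$ is a binary phylogenetic network. Binarity is immediate from the construction: $B(i_1)$ is by definition a beaded tree (hence binary), subdividing an arc with $s_2,\dots,s_q$ and attaching one outgoing arc to each $s_j$ keeps every such vertex of degree sum three, each $a_j$ meets the outgoing arc of a hybrid vertex of $B(i_1)$ at a fresh subdivision point (again degree sum three, and the target becomes a hybrid vertex of indegree two), and in the non-strictly-simple case the $t_i$ with their pendant leaves are standard tree vertices. The root has outdegree two throughout, leaves have indegree one and outdegree zero, and no loops or forbidden configurations are introduced, so $B(\vec m)$ is a binary phylogenetic network on $X$.

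Finally I would argue semi-stability, exactly as hinted in the text: ``similar arguments as in the case of $D(\vec{m})$'' means showing $B(\vec{m})$ contains no identifiable pair of tree vertices, so that by (R1) the network $F(U(B(\vec m)))$ has no identifiable pair, by (R3) $C(F(U(B(\vec m))))$ and $C(B(\vec m))$ agree, and since $B(\vec m)$ has no arc whose tail and head are both hybrid vertices we get $C(B(\vec m))=B(\vec m)$, whence $B(\vec m)$ is a resolution of $F(U(B(\vec m)))$, i.e. semi-stable. The absence of an identifiable pair follows because distinct tree vertices of $B(\vec m)$ root non-isomorphic sub(MUL-)trees: the beaded spine of $B(i_1)$ is strictly nested, each attachment point of an arc $a_j$ lies at a unique height (it has exactly $i_j$ hybrid vertices strictly below it, and the $i_j$ are distinct for $1\le j\le q-1$), and the pendant leaves $x_2,\dots,x_n$ sit at distinct positions along the subdivided root arc.

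The main obstacle is the bookkeeping in the path count for the strictly simple case: one must be careful that an arc $a_j$ whose head subdivides the outgoing arc of a hybrid vertex $h$ with exactly $i_j$ hybrid vertices strictly below $h$ really injects $2^{i_j}$ new root-to-leaf paths, and that these contributions across $j=1,\dots,q$ together with the spine are disjoint and exhaust all paths — i.e. that the path multiset decomposes cleanly according to ``which arc $a_j$ (if any) the path first uses to rejoin the bead stack''. Once this decomposition is set up correctly, matching it to $\sum_j 2^{i_j}=m_1$ is routine, and the rest of the lemma is a direct transcription of the $D(\vec m)$ argument.
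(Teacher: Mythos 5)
Your realization and binarity arguments are sound and match what the paper treats as ``an immediate consequence of the construction'' (the paper offers no written proof beyond pointing to the $D(\vec{m})$ argument for semi-stability), and your path-count decomposition by ``which arc $a_j$ a path uses to rejoin the bead stack'' is the right bookkeeping, correctly yielding $2^{i_1}+\sum_{j\ge 2}2^{i_j}=m_1$. The problem is one concrete step in your semi-stability chain: the claim that $B(\vec{m})$ has no arc whose tail and head are both hybrid vertices, and hence that $C(B(\vec{m}))=B(\vec{m})$. This is false whenever $q\ge 2$. By construction the head of $a_j$ is a subdivision vertex $w_j$ of the \emph{outgoing arc of a hybrid vertex} $h$ of $B(i_1)$; after adding $a_j$, the vertex $w_j$ has indegree two, so the arc $(h,w_j)$ has a hybrid tail and a hybrid head. (Check $\vec{m}=(9)$, Figure~\ref{table4.2}(iii): the arc $a_2$ lands just below the lowest hybrid vertex of $B(3)$, creating exactly such an arc.) The property you invoked holds for the beaded tree $B(i_1)$ and for networks of the form $F(M)$, but it does not survive the attachment of the arcs $a_j$.

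The gap is repairable without changing your overall strategy: you do not need $C(B(\vec{m}))=B(\vec{m})$, only that $B(\vec{m})$ is a \emph{resolution} of $C(B(\vec{m}))$. Contracting each arc $(h,w_j)$ produces a hybrid vertex of indegree three in $C(B(\vec{m}))$, and resolving it back (splitting off the parent $s_j$ onto a lower indegree-two hybrid vertex) recovers $B(\vec{m})$; so $B(\vec{m})$ is a resolution of $C(B(\vec{m}))$, which by your no-identifiable-pair argument and (R3) is equivalent to $C(F(U(B(\vec{m}))))=F(U(B(\vec{m})))$. That gives semi-stability as defined (equivalence to \emph{a} resolution of $F(U(B(\vec{m})))$). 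With that substitution your proof goes through and is, in substance, the argument the paper intends.
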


To gain insight into the structure of $B(\vec{m})$, we next
present formulae for counting, for a simple ploidy profile $\vec{m}$,
the number $b(\vec{m})$ of vertices in $B(\vec{m})$ and also the number of hybrid vertices of $B(\vec{m})$. Note that such formulae are known for certain types of  phylogenetic networks without beads (see e.g.\cite{MSW15,IK11} and
\cite{S16} for more).  To state them, we require further terminology.
Suppose $m\geq 1$ is an integer and 
$\vec{v}_m$ is the bitwise representation of $m$.
Then we 
denote by $p(m)$ the number of non-zero bits in
$\vec{v}_m$ bar the first one. For example,
if $m=6$ then $p(m)=1$. \blue{Furthermore, we denote the dimension of
a vector $\vec{v}$ by $\dim(\vec{v})$.}
%

%
Armed with this, the construction of \blue{$B(\vec{m})$} from 
	a simple ploidy profile $\vec{m}$ implies \blue{our first main result.}

\begin{theorem}\label{Bm-vertex-count}
	Suppose that $\vec{m}=(m_1, m_2,\ldots, m_n)$, \blue{$n\geq 1$,} is a simple ploidy profile.
	\blue{Let $\vec{i}_{m_1}=(i_1,i_2,\ldots, i_l)$, some $l\geq 1$, denote the binary representation of $m_1$. } 
		Then 
		$$\orange{b(\vec{m})}=2(\blue{i_1+\dim(\vec{i}_{m_1})-1} + n-1) + 1\blue{=2(\dim(\vec{v}_{m_1})-1+p(m_1)+n-1)+1}
		$$
		Furthermore, $B(\vec{m})$ has  $\blue{i_1+\dim(\vec{i}_{m_1})-1}$ hybrid  vertices.
\end{theorem}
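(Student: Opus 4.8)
The plan is to analyse the structure of $B(\vec m)$ directly from its recursive construction, counting vertices and hybrid vertices contributed by each ingredient. First I would treat the strictly simple case $\vec m=(m_1)$, and then add the cheap bookkeeping for the extra $n-1$ leaves $x_2,\ldots,x_n$. Recall that $B(\vec m)$ for strictly simple $\vec m$ is built from the beaded tree $B(i_1)$, which has $i_1$ hybrid vertices, each sitting inside a bead; so $B(i_1)$ has $2i_1$ internal ``spine'' vertices (the $i_1$ bead-tops and $i_1$ bead-bottoms, equivalently the $i_1$ tree vertices and $i_1$ hybrid vertices along the path to $x_1$), plus the leaf $x_1$, plus the root, for a total of $2i_1+1$ vertices if we are careful about whether the root is itself the top of the first bead (the construction identifies these counts, so I would just fix one convention and stick to it, matching Figure~\ref{table4.2}). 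Then $B(\vec m)$ subdivides one outgoing arc of the root with $q-1$ new vertices $s_2,\ldots,s_q$, where $q=\dim(\vec i_{m_1})=l$, and for each $s_j$ adds one new arc $a_j$ whose head is a \emph{new} subdivision vertex on the outgoing arc of an appropriate hybrid vertex of $B(i_1)$. Thus each index $i_j$ for $2\le j\le l$ (there are $l-1$ of them) adds exactly one $s_j$ and exactly one subdivision vertex, i.e.\ $2(l-1)$ new vertices, and turns $l-1$ tree vertices of $B(i_1)$ into hybrid vertices (each new arc is an extra incoming arc at an existing-bead outgoing-arc subdivision point). Adding these up gives $2i_1+1+2(l-1)=2(i_1+l-1)+1$ vertices in the strictly simple case, with $i_1+(l-1)=i_1+\dim(\vec i_{m_1})-1$ hybrid vertices.

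Next I would handle the general simple profile: $B(\vec m)$ is $B(\vec m')$ with one outgoing arc of the root subdivided by $n-1$ vertices $t_2,\ldots,t_n$, each receiving one pendant arc $(t_i,x_i)$. That adds $2(n-1)$ vertices (the $t_i$'s and the new leaves) and creates \emph{no} new hybrid vertices, since the $t_i$ are tree vertices. Hence $b(\vec m)=2(i_1+\dim(\vec i_{m_1})-1+n-1)+1$, and the number of hybrid vertices is unchanged at $i_1+\dim(\vec i_{m_1})-1$. The second displayed equality is then purely arithmetic on the two representations of $m_1$: by definition $\dim(\vec i_{m_1})$ equals the number of non-zero bits of $m_1$, so $\dim(\vec i_{m_1})=p(m_1)+1$ (the ``$+1$'' being the leading bit that $p$ excludes); and $i_1$, the exponent of the most significant bit, equals $\dim(\vec v_{m_1})-1$ since $\vec v_{m_1}=(v_{m_1}^f,\ldots,v_{m_1}^0)$ has $f+1=\dim(\vec v_{m_1})$ entries with $f=i_1$. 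Substituting $i_1=\dim(\vec v_{m_1})-1$ and $\dim(\vec i_{m_1})-1=p(m_1)$ into $2(i_1+\dim(\vec i_{m_1})-1+n-1)+1$ yields $2(\dim(\vec v_{m_1})-1+p(m_1)+n-1)+1$, as claimed.

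I would present the two vertex-count contributions as a short lemma-free inductive/constructive tally keyed to the bullet points of the construction in Section~\ref{sec:B(M)-def}, perhaps with a one-line reference to Figure~\ref{table4.2}(iii) for the $\vec m=(9)$ instance as a sanity check ($i_1=3$, $l=2$, $n=1$ gives $2(3+1-1)+1=9$ vertices and $4$ hybrid vertices, matching the caption $h(B)=4$). The only point requiring genuine care — and the one I'd flag as the main obstacle — is the off-by-one bookkeeping around the root of the beaded tree $B(i_1)$: depending on whether the root of $B(i_1)$ is counted as the top of the first bead or as a separate vertex, and on exactly where the arcs $a_j$ and their head-subdivision vertices land (including the case $i_j=0$, when the head is on the outgoing arc of the \emph{lowest} hybrid vertex), the raw counts shift by small constants, and one must check that the chosen convention is exactly the one under which the formula $2(i_1+\dim(\vec i_{m_1})-1)+1$ holds. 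Everything else is a routine tally plus the elementary identities relating $\vec v_{m_1}$, $\vec i_{m_1}$, and $p(m_1)$.
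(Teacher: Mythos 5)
Your proposal is correct and follows essentially the same route as the paper, which states that the formula "is implied by the construction of $B(\vec{m})$" and leaves exactly this tally implicit: $2i_1+1$ vertices in the beaded tree $B(i_1)$ (with the root as the top of the first bead), plus $2(l-1)$ from the $s_j$'s and the head-subdivision vertices of the arcs $a_j$, plus $2(n-1)$ for the pendant leaves, and the elementary identities $i_1=\dim(\vec{v}_{m_1})-1$ and $\dim(\vec{i}_{m_1})-1=p(m_1)$. The only nit is your phrase "turns $l-1$ tree vertices of $B(i_1)$ into hybrid vertices" — the $l-1$ new hybrid vertices are the freshly created subdivision vertices on the outgoing arcs of hybrid vertices of $B(i_1)$, not pre-existing tree vertices, but this does not affect the count.
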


\blue{We remark in passing that in case $\vec{m}=(m)$ is strictly simple then any binary phylogenetic network $N$ that realizes $\vec{m}$
has  $2h(N)+1$ vertices since $N$ has only one leaf and, so, the number of tree vertices of $N$ plus the root
	must equal its number of hybrid vertices. Note that in case $N$ is $B(\vec{m})$ then this also follows
	from Theorem~\ref{Bm-vertex-count} since $n=1$ and $ i_1+\dim(\vec{i}_{m_1})-1$ is the number of hybrid vertices
	of $N$ and therefore also the number of tree vertices of $N$ plus the root.}

\section{\blue{Realizing general ploidy profiles}}
 \label{sec:n(m)}
 
To help establish 
a formula for computing the hybrid number of a ploidy profile, we start by associating a binary phylogenetic network $N(\vec{m})$ 
on $X$ to a ploidy profile $\vec{m}$ on
$X$ that realizes $\vec{m}$. This network is
recursively \green{obtained} via a two-phase process which we present in the form of pseudo-code in
	Algorithms~\ref{alg:simplification-seq} (Phase I) and \ref{alg:Nmalgorithm} (Phase II).
	We next outline both phases and refer the reader  \blue{to Figure~\ref{fig:modification} 
	for an illustration of the three cases considered in Algorithm~\ref{alg:Nmalgorithm} 
and}	to Figure~\ref{fig:new} for an illustration of the construction of $N(\vec{m})$ from the ploidy profile $\vec{m}=(12,6,6,5)$. The \blue{phylogenetic network} $D$ in that figure is the phylogenetic network with which the construction of $N(\vec{m})$ is initialized.

Suppose $\vec{m}=(m_1,\ldots m_n)$ is a ploidy profile on $X$. \green{Then, in Phase I,} we \blue{iteratively} generate a simple ploidy profile 
$\vec{m}_t$ from $\vec{m}$. 
This process is captured via
a sequence $\sigma(\vec{m})$ of ploidy profiles which we call \blue{the} {\em simplification sequence}
for $\vec{m}$ \blue{and formally define as the output of Algorithm~\ref{alg:simplification-seq} when given $\vec{m}$ as input}. The first element of $\sigma(\vec{m})$ is $\vec{m}$ and the last element is 
\blue{a simple ploidy profile which we call the {\em terminal element} of $\sigma(\vec{m})$ and denote
by} $\vec{m}_t$. 
We denote the number of elements of $\sigma(\vec{m})$ other than $\vec{m}$ by $s(\vec{m})$. Note that if $\vec{m}$ is a simple ploidy profile then $s(\vec{m})=0$ as $\vec{m}=\vec{m}_t$ holds in this case.
 Informally speaking, the purpose of 
	\blue{$\sigma(\vec{m}): \vec{m}_0=\vec{m},\vec{m}_i,\ldots \vec{m}_{s(\vec{m})}=\vec{m}_t$} is to allow us to construct, for all $0\leq i\leq s(\vec{m})$,
	the network $N(\vec{m}_i)$ from $N(\vec{m}_{i+1})$
	by reusing $N(\vec{m}_{i+1})$ (or parts of it) as much as possible
	\blue{(see \cite{HM22} for more on such sequences)}.
	
To formally state Algorithm~\ref{alg:simplification-seq}, we require further \blue{notations. Suppose $\vec{m}=(m_1,\ldots, m_n)$ is a ploidy profile on $X$.
	Then we denote for all $1\leq i\leq n$ the element of $X$ that indexes $m_i$ by  $x(m_i)$. Furthermore, for any non-empty sequence
	 $\sigma$ and any $z$, we denote by $\sigma\cup\{z\}$ the sequence obtained  by adding $z$ to the end of $\sigma$.}

	\begin{algorithm}[h!]
		\caption{\label{alg:simplification-seq} \blue{The} simplification sequence \blue{of}  a ploidy profile.}
		\begin{algorithmic}[1]
			\Require{A ploidy profile $\vec{m} = (m_1, m_2, \ldots, m_n)$ on $X = \{x_1,x_2,\ldots,x_n\}$,  \blue{$n \geq 1$.}
			}
			\Ensure{The simplification sequence $\sigma(\vec{m})$
			for $\vec{m}$ and \blue{a} set $X(\vec{m})$ 
				that contains, \blue{for all ploidy profiles $\vec{m}'$ in $\sigma(\vec{m})$, 
				the set $X'$ that indexes $\vec{m}'$.}
			}
			\State Put $\vec{m}_0\leftarrow \vec{m}$, $\sigma(\vec{m_0})\leftarrow \vec{m_0}$, \blue{$X_0\leftarrow X$, $X(\vec{m_0})\leftarrow\{X_0\}$, and  $k\leftarrow n$.}
			\blue{
			\If{$\vec{m}_0$ is simple} 
			\State Return $\sigma(\vec{m}_0)$ and $X(\vec{m}_0)$.
			\EndIf
		}
			\blue{\While{$\vec{m}=(m_1,\ldots, m_k)$ is not simple}
				\label{alg:loop}
				\State  Put $\alpha=m_1-m_2$ and compute a ploidy profile $\vec{m'}$ on a set $X'$ as follows:
				\If{$\alpha = 0$}
				\State $\vec{m'}=(m_2,m_3,\ldots, m_k) $ and $X'=\{x(m_2), x(m_3),\ldots, x(m_k)\}$.
				\EndIf
				\If{$\alpha >m_2$} \label{simp:alpha-larger}
				\State $\vec{m'}=(\alpha, m_2,m_3,\ldots, m_k) $ and $X'=\{x(\alpha), x(m_2), x(m_3),\ldots, x(m_k)\}$.
								\EndIf
			\If{$\alpha \leq m_2$}
		  \If{there exists some $j\in\{1,\ldots, k-1\}$ so that $m_{j+1} < \alpha \leq m_j$}
		  \State $\vec{m'}=(m_2, m_3, \ldots,  m_j, \alpha, m_{j+1},\ldots, m_k)$ and
		  $X'=\{  x(m_2), x(m_3), \ldots$,  $x(m_j),  x(\alpha), 
		  x( m_{j+1}), \ldots, x(m_k)    \}$.
		  \EndIf
		  \If{ $\alpha = m_k$}
		  \State $\vec{m'}=(m_2, m_3, \ldots,  m_k, \alpha)$ and
		  $X'=\{  x(m_2), x(m_3), \ldots, x(m_k), x(\alpha)\}$.
								\EndIf
								\EndIf
			\State Put $\sigma(\vec{m_0})\leftarrow\sigma(\vec{m_0})\cup\{\vec{m'} \}$, $X(\vec{m_0})\leftarrow X(\vec{m_0})\cup \{X'\}$, $k\leftarrow |X'|$, and	$\vec{m}\leftarrow\vec{m'}$
			and return to Line~\ref{alg:loop}. 
			\EndWhile
			\State 
			Return 
			$\sigma(\vec{m}_0)$ and $X(\vec{m}_0)$.
		}
			\label{alg:alg1-last-line}
		\end{algorithmic}
	\end{algorithm}
	
Phase~II is concerned with generating
 the \blue{phylogenetic} network $N(\vec{m})$ from 
	the simplification sequence of $\vec{m}$ and the set \blue{$X(\vec{m})$} (for both see Phase~I), and an attainment $\mathcal A(\vec{m}_t)$ of $\vec{m}_t$. 
	Note that in case an attainment for $\vec{m}_t$ is not known, we can always initialize the construction
	of $N(\vec{m})$ with \blue{$D(\vec{m})$ or $B(\vec{m})$.
	 The} number of hybrid vertices of the generated network in this case is an upper bound on $h(N(\vec{m}))$ \blue{and therefore also on the hybrid number of 
 $\vec{m}$}.
	   
	To obtain $N(\vec{m})$, we use a trace-back through $\sigma(\vec{m})$
	starting with $\vec{m}_t$.
	More precisely, assume that \blue{$\vec{m}_i=(m_1,\ldots, m_k)$, some $k\geq 2$
	and $\vec{m}_{i+1}$ are two ploidy profiles in $\sigma(\vec{m})$, some $0\leq i\leq s(\vec{m})-1$.} Then to obtain $N(\vec{m}_i)$ from $N(\vec{m}_{i+1})$
we distinguish again
between the cases that \blue{$\alpha:=m_1-m_2=0$, $\alpha >m_2$ and
	$\alpha\leq m_2$}\orange{,} see Figure~\ref{fig:modification}. Note that 
\blue{there} might \blue{be} non-equivalent attainments of $\vec{m}_t$
with which to initialize the construction of $N(\vec{m})$. 
	
\begin{algorithm}[h!]
	\caption{\label{alg:Nmalgorithm} \blue{The} construction of \blue{the} phylogenetic network $N(\vec{m})$ from a 
	ploidy profie $\vec{m}$ \blue{and an attainment for $\vec{m}_t$.}
}
	\begin{algorithmic}[1]
	\Require{A ploidy profile $\vec{m}$ on $X$, an
			 attainment $\mathcal A(\vec{m}_t)$ of  $\vec{m}_t$,
		 and the output of Algorithm~\ref{alg:simplification-seq}
	 }		 
		\Ensure{\blue{The} phylogenetic network $N(\vec{m})$ 
			constructed from $\mathcal A(\vec{m}_t)$. }
		\State \blue{Put $\vec{m}_0\leftarrow \vec{m}$, $\vec{m'}\leftarrow \vec{m}_t$, and $N(\vec{m'})\leftarrow\mathcal A(\vec{m}_t)$.}
\blue{	\If{$\vec{m'}=\vec{m}_0$} 
	\State return $N(\vec{m'})$.
	\EndIf
}
	\blue{
		\While{ $\vec{m'}\not=\vec{m}_0$ } \label{line:loop}
	\State let $\vec{m}=(m_1,\ldots, m_l)$ denote the predecessor of $\vec{m'}=(m_1',\ldots, m_k')$ in $\sigma(\vec{m_0})$,
	some $k$ and some $l$.
	Put $\alpha=m_1-m_2$ and construct the phylogenetic network $N(\vec{m})$ from 
	$N(\vec{m'})$ as follows.
	\If{$\alpha=0$} \label{alpha=0}
	\State for all $2\leq i\leq k$, relabel the leaf $x(m_i')$ of $N(\vec{m'})$ by $x(m_{i+1})$. Replace the leaf $x(m_1')$
	of $N(\vec{m'})$ by the cherry $\{x(m_1),x(m_2)\}$. \label{cond-alpha=0}
	\EndIf
	\If{$\alpha>m_2$} \label{alpha-larger}
	\State
	 for all $1\leq i\leq k$, relabel the leaf $x(m_i')$ of $N(\vec{m'})$ by $x(m_i)$. Subdivide the incoming arcs of  leaves $x(m_1)$
	 and $x(m_2)$ by  vertices $u$ and $v$, respectively, and add the arc $(v,u)$.
\EndIf
\If{$\alpha\leq m_2$} \label{alpha-smaller}
		\State 
		let $j$ be such that $m_{j+1}<\alpha\leq m_j$.
		Subdivide the incoming arc of $x(m_j')$ by a new vertex $v$ and replace 
		$x(m_1')$ by the cherry $\{x(m_1), x(m_2)\}$.
		Subdivide the incoming arc of $x(m_1)$ by a new vertex $u$.
		Add the arc $(v,u)$ and delete $x(m_j')$ as well as its incoming arc $(v,x(m_j'))$
		(suppressing $v$ as $indeg(v)=1=outdeg(v)$ now holds).
		For all $2\leq k\leq j-1$, put $x(m_{k+1})\leftarrow x(m_k')$ and, for all remaining $k$, put $x(m_k)\leftarrow x(m_k')$. \label{cond-alpha-smaller}
		\EndIf
		\State Put $\vec{m'}\leftarrow \vec{m}$ and return to line~\ref{line:loop}. 		
		\EndWhile	
	}
\State Return $N(\vec{m})$.
				\label{alg:alg2-last-line} 
\end{algorithmic}
\end{algorithm}
	
\green{	
	\begin{figure}[h!]
		\centering
		\includegraphics[scale=0.23]{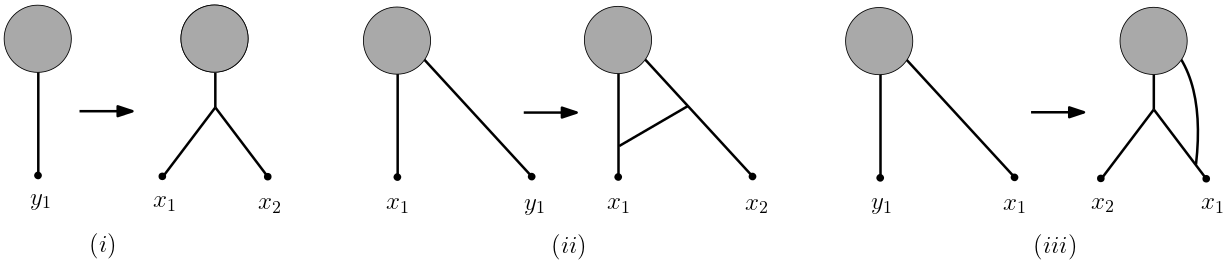}
		\caption{The three cases in the construction of the network
			$N(\vec{m})$ from a ploidy profile $\vec{m}=(m_1,m_2\ldots,m_n)$ considered in
			Algorithm~\ref{alg:Nmalgorithm}. For \blue{$\alpha=m_1-m_2$}, the case \blue{$\alpha=0$
		is depicted in (i),} the 
			case \blue{$\alpha>m_2$ in (ii)}, and  the case 
			\blue{$\alpha\leq m_2$ in (iii)}. \blue{In (iii),} the dashed arc 
			and the vertex \blue{$x(m_j')$} are deleted and the vertex $v$ is suppressed. In each case, the grey disk indicates
			the part of the \blue{phylogenetic}
			network of no relevance to the discussion. 
			\label{fig:modification}}
	\end{figure}
}


To illustrate the construction of $N(\vec{m})$,
\green{consider the 
ploidy profile $\vec{m}=(12,6,6,5)$ on 
$X=\{x_1,\ldots,x_4\}$. Then $\vec{m}$,
$(6,6,6,5)$, $(6,6,5)$, $(6,5)$, $(5,1)$
is the simplification sequence $\sigma(\vec{m})$
associated to $\vec{m}$ \orange{because, by definition, the first element of $\sigma(\vec{m})$ is always 
$\vec{m}$.} The ploidy profile
$(5,1)$ is $\vec{m}_t$. The phylogenetic network $D$ on \blue{$X=\{x_1,x_2\}$}
\blue{on the left of Figure~\ref{fig:new}
is an attainment of $\vec{m}_t$ in the
form of  $B(\vec{m}_t)$.}  Initializing   Algorithm~\ref{alg:Nmalgorithm} with $B(\vec{m}_t)$
yields the phylogenetic network $N(\vec{m})$ at the right of that figure. Apart from the second arrow which
is labelled $(6,5)\to (6,6,6,5)$ as it
combines the steps $(6,5)\to (6,6,5)$ and $(6,6,5)\to (6,6,6,5)$, each arrow is labelled with the 
corresponding traceback step in $\sigma(\vec{m})$. 
}

\begin{figure}[h!]
	\centering
	\includegraphics[scale=0.34]{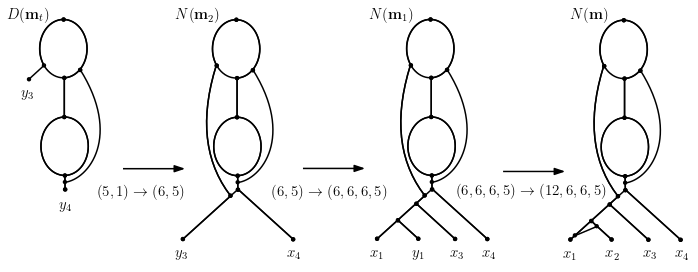}
	\caption{The construction of $N(\vec{m})$ for the ploidy profile
			$\vec{m}=(12,6,6,5)$  \blue{on $X=\{x_1,x_2, x_3, x_4\}$} where we have combined the
			steps $(6,5) \rightarrow (6,6,5)$ and $(6,6,5)\rightarrow (6,6,6,5)$ into the step 
			$(6,5)\rightarrow (6,6,6,5)$. 
			The leftmost network $D$  \blue{on $X'=\{x_1,x_2\}$} is an attainment of  $\vec{m}_t=(5,1)$ 
			\blue{in the form of $B(\vec{m})$} and initializes the construction of $N(\vec{m})$. The
			network $N(\vec{m}_2)$  on $X'$ realizes the ploidy profile $\vec{m}_2=(6,5)$
			and the network $N(\vec{m}_1)$  \blue{on $X$} realizes the
			ploidy profile $\vec{m}_1=(6,6,6,5)$.
			The rightmost network is $N(\vec{m})$. 
			The arrow labels indicate how a ploidy profile
		in $\sigma(\vec{m})$ was obtained. 
						\label{fig:new}}
\end{figure}


For any \blue{attainment} $\mathcal A(\vec{m}_t)$
	of the terminal element $\vec{m}_t$ of the simplification sequence $\sigma(\vec{m})$ of a ploidy profile $\vec{m}$ on $X$, the
	graph $N(\vec{m})$ is a phylogenetic network on $X$ that realizes $\vec{m}$. Also,
	at each step in the traceback through  $\sigma(\vec{m})$ 
	the number of vertices is increased by exactly two.  \blue{Denoting}  the number of vertices
of $N(\vec{m})$ by $n(\vec{m})$ and the number of vertices
in a \blue{binary} attainment $\mathcal A(\vec{m}_t)$ of $\vec{m}_t$ by $a(\vec{m}_t)$, \blue{we obtain our next result}.

\begin{lemma}
	\label{lem:nm-numbers}	
	Suppose $\vec{m}$ is a ploidy profile on $X$. Then for any binary attainment of $\vec{m}_t$ used in the initialization of the construction of $N(\vec{m})$, we  have that $N(\vec{m})$ is a binary phylogenetic network on $X$ that realizes $\vec{m}$. Furthermore,
	$n(\vec{m}) = a(\vec{m}_t) + 2 s(\vec{m})$.
\end{lemma}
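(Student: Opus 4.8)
The statement collects three assertions about $N(\vec{m})$: that it is binary, that it is a phylogenetic network on $X$, that it realizes $\vec{m}$, and the vertex count $n(\vec{m}) = a(\vec{m}_t) + 2s(\vec{m})$. The plan is to prove all of these simultaneously by induction on the length $s(\vec{m})$ of the simplification sequence $\sigma(\vec{m})$, peeling off one traceback step of Algorithm~\ref{alg:Nmalgorithm} at a time. The base case $s(\vec{m})=0$ is immediate: then $\vec{m}=\vec{m}_t$ is simple, $N(\vec{m}) = \mathcal A(\vec{m}_t)$ by the initialization line of Algorithm~\ref{alg:Nmalgorithm}, which by hypothesis is a binary attainment (hence a binary phylogenetic network on $X$ realizing $\vec{m}$), and $n(\vec{m}) = a(\vec{m}_t) = a(\vec{m}_t) + 2\cdot 0$.

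\textbf{Inductive step.}
For the inductive step, I would fix consecutive profiles $\vec{m}=(m_1,\dots,m_l)$ and $\vec{m'}=(m_1',\dots,m_k')$ in $\sigma(\vec{m}_0)$ with $\vec{m}$ the predecessor of $\vec{m'}$, assume inductively that $N(\vec{m'})$ is a binary phylogenetic network on the index set $X'$ of $\vec{m'}$ realizing $\vec{m'}$, and check each of the three cases $\alpha=0$, $\alpha>m_2$, $\alpha\le m_2$ of Algorithm~\ref{alg:Nmalgorithm} in turn, with Figure~\ref{fig:modification} as a guide. In each case one verifies three things: (a) the local modification keeps every non-root non-leaf vertex a tree vertex or hybrid vertex with $outdeg(\rho)=2$, so the result is again a phylogenetic network, and keeps all degree sums at three (so binary); (b) the leaf set becomes exactly the index set $X$ of $\vec{m}$ and the new path counts $m_{N(\vec{m})}(x(m_i))$ equal the components $m_i$; and (c) exactly two vertices are added. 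For (c): in case $\alpha=0$ one subdivides nothing but replaces a leaf by a cherry, adding two new leaves and losing one old leaf — net $+2$ after accounting; more carefully, replacing leaf $x(m_1')$ by a cherry $\{x(m_1),x(m_2)\}$ adds the cherry's parent is the old leaf vertex re-used, so two new vertices. In case $\alpha>m_2$ one subdivides two incoming arcs (two new vertices $u,v$) and adds the arc $(v,u)$ — net $+2$. In case $\alpha\le m_2$ one subdivides one incoming arc ($v$), one incoming arc ($u$), replaces a leaf by a cherry ($+2$ leaves, $-1$ leaf giving $+1$... ), then deletes $x(m_j')$ and suppresses $v$ — I would tabulate the additions and deletions and confirm the net change is exactly $2$. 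Then $n(\vec{m}) = n(\vec{m'}) + 2 = a(\vec{m}_t) + 2s(\vec{m'}) + 2 = a(\vec{m}_t) + 2s(\vec{m})$, since $s(\vec{m}) = s(\vec{m'})+1$.

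\textbf{The path-count bookkeeping.}
The part requiring the most care is (b), the verification that the number of directed root-to-leaf paths transforms correctly under each of the three local moves, so that $N(\vec{m})$ realizes $\vec{m}$ rather than some other profile. This is the crux: one must check, for instance in the case $\alpha\le m_2$, that after subdividing the incoming arc of $x(m_j')$ by $v$, adding $(v,u)$ with $u$ on the incoming arc of the new leaf $x(m_1)$, and deleting $x(m_j')$, the new leaf $x(m_1)$ acquires path count $m_{j}' + (\text{count through } v) $ equalling $m_1$ while $x(m_2)$ gets $m_2$ and every relabelled leaf $x(m_{i})$ inherits the count of $x(m_i')$. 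This is a direct but fiddly additive/multiplicative count over directed paths, done most cleanly by noting a cut-arc structure and invoking the path-count identity used in Lemma~\ref{lem:folding}(iii), namely $m_N(y)=m_{N_1^x}(x)\times m_{N_2}(y)$, together with the definition $\alpha=m_1-m_2$ built into Algorithm~\ref{alg:simplification-seq}. The ``increased by exactly two'' observation for the vertex count, by contrast, is routine once the three moves are written out, and binariness follows because every move either subdivides an arc (creating a degree-$3$ vertex) or expands a leaf into a cherry (the old leaf becomes a degree-$3$ tree vertex) and adds only arcs between newly created degree-$2$ vertices, bringing them to degree $3$.
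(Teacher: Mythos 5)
Your proposal is correct and takes essentially the same route as the paper: the paper offers no formal proof of this lemma, simply asserting it as a direct consequence of the construction together with the observation that each traceback step through $\sigma(\vec{m})$ adds exactly two vertices, and your induction on $s(\vec{m})$ with the three-case check of Algorithm~\ref{alg:Nmalgorithm} is precisely that argument written out in full. (One small quibble: the path-count bookkeeping in the traceback steps is additive — a new incoming arc at a vertex adds the path count of its tail — so the multiplicative cut-arc identity from Lemma~\ref{lem:folding}(iii) is not really the right tool there, but your "direct additive count" remark already covers what is needed.)
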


\blue{In combination with}
	Theorem~\ref{Bm-vertex-count}, it follows   that $N(\vec{m})$ has at most 
	$b(\vec{m}_t) + 2 s(\vec{m})=2\blue{(i_1+\dim(\vec{i}_{m_1})} + n +s(\vec{m}) +l)-3$ vertices 
	and also at most $\blue{ i_1+\dim(\vec{i}_{m_1})-1}+s(\vec{m})$
		 hybrid vertices where $\vec{m}_t=(m_1,\ldots, m_l)$,
		 some $l\geq 1$, \blue{and $i_1$ is the first component in the binary representation $\vec{i}_{m_1}$ of $m_1$}.
		 %
%
		 Furthermore, we have

	\begin{proposition}\label{prop:upper-bound}
		Suppose $\vec{m}=(m_1,\ldots, m_n)$ is a ploidy profile on $X$ such that 
		\blue{$B(\vec{m}_t)$} is a \blue{binary} attainment of $\vec{m}_t$.
		For all $1\leq k\leq n$, let \blue{$(i_{k,1},\ldots,i_{{k,l_k}})$}
		denote the binary representation  of $m_k$, some 
		$l_k\geq 1$. Then the following holds.
		\begin{enumerate}
			\item[(i)]	$h(\vec{m})\leq \sum_{k=1}^n (\blue{i_{k,1}}+l_k-1)$. In case  $\vec{m}$ is 
			simple, $h(\vec{m})=\blue{i_{1,1}}+l_1-1$ which is sharp. 
			\item[(ii)] If $m_i=2^{\blue{i_{k,1}}} $ holds for all $1\leq k\leq n$ 
			then $h(\vec{m})=\blue{i_{1,1}}$.
		\end{enumerate}
	\end{proposition}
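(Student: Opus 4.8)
\textbf{Proof plan for Proposition~\ref{prop:upper-bound}.}

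The plan is to obtain the upper bounds in both parts by explicitly exhibiting binary phylogenetic networks on $X$ that realize $\vec{m}$ and counting their hybrid vertices, and then to prove sharpness of the claimed equalities via a matching lower bound. For the upper bound in part~(i), the key observation is that one may build a realization of $\vec m$ directly from the bitwise/binary representations of its components without going through the simplification sequence: start from a caterpillar-like phylogenetic tree with leaves $x_1,\ldots,x_n$, and for each $k$ replace the pendant edge of $x_k$ by a gadget that multiplies the single path reaching it into $m_k$ paths. The most economical such gadget is exactly the strictly-simple construction $B((m_k))$ from Section~\ref{sec:B(M)-def}: by Theorem~\ref{Bm-vertex-count} (with $n=1$) it realizes the profile $(m_k)$ using precisely $i_{k,1}+l_k-1$ hybrid vertices, where $(i_{k,1},\ldots,i_{k,l_k})$ is the binary representation of $m_k$. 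Since each leaf-gadget attaches along a cut-arc, the hybrid numbers add, giving $h(\vec m)\le\sum_{k=1}^n(i_{k,1}+l_k-1)$. Alternatively — and this is presumably the intended route given the hypothesis that $B(\vec m_t)$ attains $\vec m_t$ — one invokes Lemma~\ref{lem:nm-numbers} together with Theorem~\ref{Bm-vertex-count}: $N(\vec m)$ initialized with $B(\vec m_t)$ has $i_{1,1}+\dim(\vec i_{m_1})-1+s(\vec m)$ hybrid vertices, and one checks that when every traceback step of Algorithm~\ref{alg:Nmalgorithm} is a ``cherry/new-hybrid'' step the total telescopes to the stated sum; I would include both remarks but carry out only the direct cut-arc construction in detail, since it is self-contained.

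For the simple case $\vec m=(m_1,1,\ldots,1)$ in part~(i), the upper bound $h(\vec m)\le i_{1,1}+l_1-1$ is immediate from Lemma~\ref{lem:length-2}, which tells us $B(\vec m)$ is a binary realization of $\vec m$, combined with the hybrid-vertex count in Theorem~\ref{Bm-vertex-count}. For \emph{sharpness} (i.e. that this is actually $h(\vec m)$ and not merely an upper bound), the argument is a counting/parsimony bound: any binary phylogenetic network $N$ realizing $\vec m$ must in particular have $m_N(x_1)=m_1$, and a binary network on one effective leaf with $h$ hybrid vertices has at most $2^h$ root-to-leaf paths, so $h\ge\lceil\log_2 m_1\rceil=\dim(\vec v_{m_1})-1$. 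To push this up to $i_{1,1}+l_1-1$ one needs the finer fact, encoded already in the $m=5$ and $m=47$ discussions of Section~\ref{sec:simple}, that counting paths through a binary stack forces the number of hybrids strictly below a given ``branch point'' to match an exponent appearing in the binary representation; I would phrase this as: the multiset of ``heights'' at which the single root-to-$x_1$ path of Proposition~\ref{prop:key} picks up a side-contribution is constrained so that the total hybrid count is at least $i_{1,1}+(l_1-1)$, with the $l_1-1$ term accounting for the $l_1-1$ non-leading nonzero bits and $i_{1,1}$ for the height of the leading bit. This sharpness direction is the step I expect to be the main obstacle: the upper bounds are routine constructions, but a clean lower-bound argument that $B((m_1))$ is optimal among \emph{all} binary realizations — not just those of the form $D$ or $B$ — requires carefully combining Proposition~\ref{prop:key} (all hybrids lie on one path to $x_1$) with a path-counting identity along that path, and ruling out the kind of savings that occur for composite $m_1$ such as $m_1=9$ where $D$ and $B$ tie.

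For part~(ii), when $m_k=2^{i_{k,1}}$ for every $k$, each binary representation is the singleton $(i_{k,1})$, so $l_k=1$ and the bound from part~(i) reads $h(\vec m)\le\sum_{k=1}^n i_{k,1}$; but now each leaf-gadget $B((2^{i_{k,1}}))$ is simply a path of $i_{k,1}$ stacked beads, and these nested powers-of-two stacks can share their top portions rather than being attached independently. Concretely, since $\vec m$ is in descending order we have $i_{1,1}\ge i_{2,1}\ge\cdots\ge i_{n,1}$, and the plan is to build a single stack of $i_{1,1}$ beads and hang $x_k$ off the bead at height $i_{k,1}$ for each $k$, exactly as in the $B(\vec m)$ construction for a strictly simple profile extended by pendant arcs; this realizes $\vec m$ using only $i_{1,1}$ hybrid vertices, giving $h(\vec m)\le i_{1,1}$. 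The matching lower bound $h(\vec m)\ge i_{1,1}$ is again the path-counting bound applied to $x_1$: $m_N(x_1)=2^{i_{1,1}}$ forces at least $i_{1,1}$ hybrid vertices in any binary realization (a binary network has at most $2^{h(N)}$ root-to-leaf paths to any fixed leaf). Combining the two inequalities yields $h(\vec m)=i_{1,1}$, completing the proof.
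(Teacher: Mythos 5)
Your upper-bound construction for part (i) is essentially the paper's: the paper builds a binary network $B$ by hanging the networks $B_k=B((m_k))$ off a common (resolved) root and adds up their hybrid numbers via Theorem~\ref{Bm-vertex-count}, which is the same in substance as your caterpillar-with-gadgets construction. Your treatment of part (ii) is also sound, and in fact more explicit than the paper's one-line justification: the shared bead stack of height $i_{1,1}$ with $x_k$ attached at height $i_{k,1}$ gives the upper bound $h(\vec m)\leq i_{1,1}$, and the path-counting bound $m_N(x_1)=2^{i_{1,1}}\leq 2^{h(N)}$ for binary $N$ gives the matching lower bound.

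The genuine gap is in your sharpness argument for the simple case of part (i). You propose to prove, unconditionally, that every binary realization of a simple profile $(m_1,1,\ldots,1)$ needs at least $i_{1,1}+l_1-1$ hybrid vertices, via a refinement of the $2^h$ path-counting bound tied to the binary representation of $m_1$. That statement is false, and the paper's own examples refute it: for $\vec m=(15)$ one has $i_{1,1}+l_1-1=3+4-1=6$, yet $D(\vec m)$ realizes $\vec m$ with only $5$ hybrid vertices, and for $\vec m=(47)$ (a prime, so the savings are not a compositeness phenomenon) one has $i_{1,1}+l_1-1=9$ while Figure~\ref{table4.2} exhibits a realization with $8$ hybrid vertices. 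So no path-counting refinement can deliver the lower bound $i_{1,1}+l_1-1$ over all binary realizations, and the step you yourself flag as ``the main obstacle'' is in fact impossible. What you are missing is that the proposition carries the standing hypothesis that $B(\vec m_t)$ is a binary attainment of $\vec m_t$; when $\vec m$ is simple we have $\vec m_t=\vec m$ (since $s(\vec m)=0$), so $B(\vec m)$ is an attainment \emph{by hypothesis}, and the equality $h(\vec m)=h(B(\vec m))=i_{1,1}+l_1-1$ follows immediately from Theorem~\ref{Bm-vertex-count} with no lower-bound argument needed. The ``sharpness'' asserted is that the general upper bound $\sum_k(i_{k,1}+l_k-1)$ is met with equality for such profiles, not that $B(\vec m)$ is optimal for every simple $\vec m$.
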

	
	\begin{proof}
		(i)	To see the stated inequality, we construct a 
		binary phylogenetic network $B$ on $X=\{x_1,\ldots, x_n\}$ 
		from $\vec{m}$ as follows. \blue{For all}
		$1\leq k\leq n$, \blue{we first construct  $B_k=B(\vec{m}_k)$ 
		 where $\vec{m}_k$ is the} 
		strictly simple ploidy profile \blue{$(m_k)$}. Next,
		we add a new vertex $\rho$ and, for all $1\leq k\leq n$,		
		an arc from $\rho$ to the
		root of $B_k$. If the resulting phylogenetic network
		on $X$ is binary then that network is $B$. Otherwise, $B$
		is a phylogenetic network obtained by resolving $\rho$
		\blue{so that $\rho$ has outdegree two}. 
		
		\blue{By construction,} $B$ realizes $\vec{m}$ \blue{because
		$B_k$ realizes $\vec{m}_k$, for all $1\leq k\leq n$. By} 
		Theorem~\ref{Bm-vertex-count}, \blue{it follows that}
		$h(B_k)=\blue{i_{k,1}}+l_k-1$. Thus,
		$h(\vec{m})\leq h(B)= \sum_{k=1}^n(\blue{i_{k,1}}+l_k-1)$, as required. 
		If $\vec{m}$ is simple then $k=1$ and so  
		$h(B)=h(B_1)=\blue{i_{1,1}}+l_1-1$.
		
		(ii)
	This is a straight forward consequence of (i) and the
		fact that in this case $B_k$ is the  \blue{beaded tree $B(i_{k,1})$}.
	\end{proof}

Note that as the example of the ploidy profile $(k^l,k)$ for some $l,k\geq 2$ 
	shows, there exists an infinite family of
	ploidy profiles $\vec{m}$ for which the length 
	of the simplification sequence for $\vec{m}$ is at least $k^{l-1}+1$ and therefore grows exponentially in $l$. As a consequence 
	of this, we also have, for any attainment of $\vec{m}_t$, that the number of hybrid vertices
in $N(\vec{m})$ can grow exponentially in $l$.
In view of this, we next study \orange{simplification} sequences
for special types of ploidy profiles. 
To this end we 
call an element $j\in\{1,\ldots, n\}$ {\em maximum} if  \blue{$m_j$ is the
last component of a ploidy profile $\vec{m}= (m_1,\ldots, m_n)$, $n\geq1$, that is not one.}

\begin{proposition}\label{prop-simp-length}
	Suppose $\vec{m}=(m_1,\ldots, m_n)$ is a ploidy profile on $X$. \green{Let $q$ denote the maximum index of $\vec{m}$.} Then the following holds
	\begin{enumerate}
		\item[(i)] If  $k\geq 2$ is an integer such that $m_i=k$ holds for all 
		$1\leq i\leq q$ then $s(\vec{m})=q-1$.
		\item[(ii)] If $k\geq 1$ and  $l\geq q+2$ are integers such that 
		$m_i=k(l-i)$ holds for all $1\leq i\leq q$ then $s(\vec{m})=\blue{l+q-3}$. 
		\end{enumerate}
\end{proposition}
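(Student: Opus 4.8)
The plan is to prove both parts by carefully running Algorithm~\ref{alg:simplification-seq} on the two families of ploidy profiles and counting how many new profiles it appends before reaching a simple profile. For both parts the key observation is that in Phase~I the quantity $\alpha = m_1 - m_2$ is computed from the two largest components, and the way $\vec{m}'$ is formed (drop $m_1$, or reinsert $\alpha$ somewhere) is entirely governed by comparing $\alpha$ to the remaining components. So I would first record the generic transition rules as small lemmas on vectors, then specialise.

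\textbf{Part (i).} Here $\vec{m} = (k,k,\ldots,k,1,\ldots,1)$ with $q$ copies of $k$ (and $n-q$ trailing ones), $k \geq 2$. Since $m_1 = m_2 = k$ we are always in the case $\alpha = 0$, so the step simply deletes one copy of $k$: $(k,\ldots,k,1,\ldots,1) \mapsto (k,\ldots,k,1,\ldots,1)$ with one fewer $k$. I would argue by induction on $q$: as long as there are at least two copies of $k$ at the front, $\alpha = 0$ and the profile shrinks by exactly one component, contributing one element to $\sigma(\vec{m})$ beyond the current one. After $q-1$ such steps we reach $(k,1,\ldots,1)$, which is simple, so the while-loop terminates. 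Counting: $q-1$ profiles were appended, hence $s(\vec{m}) = q-1$. The only subtlety is checking that $(k,1,\ldots,1)$ is indeed recognised as simple by the algorithm (it is, since only the first component exceeds $1$), so no further step is taken.

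\textbf{Part (ii).} Now $m_i = k(l-i)$ for $1 \leq i \leq q$, with $l \geq q+2$, and trailing ones for $i > q$. The first few components are $k(l-1) > k(l-2) > \cdots > k(l-q) \geq 2k$, strictly decreasing with constant gap $k$. Thus $\alpha = m_1 - m_2 = k(l-1) - k(l-2) = k$, and since $k \leq m_2 = k(l-2)$ (using $l \geq q+2 \geq 4$, so $l-2 \geq 2$), we are in the case $\alpha \leq m_2$; moreover $\alpha = k$ falls between the trailing ones and the last copy of the arithmetic-progression block, so $\alpha$ is reinserted near the end: $(k(l-1),\ldots,k(l-q),1,\ldots,1) \mapsto (k(l-2),\ldots,k(l-q), k, 1,\ldots,1)$ — i.e.\ the head shrinks by one and a new value $k$ is spliced in just before (or among) the ones. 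The crucial point is that this new profile still has strictly-decreasing leading block with gap $k$, so the same analysis repeats: each step peels one element off the arithmetic progression and parks a copy of $k$ toward the tail. I would set up the induction so that after $j$ steps the profile is $(k(l-1-j),\ldots,k(l-q), k,\ldots,k, 1,\ldots,1)$ with $j$ copies of the parked value $k$; this is valid as long as the head still has $\geq 2$ entries strictly above $k$, i.e.\ $k(l-1-j) > k$, i.e.\ $j < l-2$.

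Once the head has been ground down, the profile becomes a block of many equal values $k$ (the old $k(l-q)=2k$ having been reduced step by step and the parked $k$'s accumulating) followed by ones, and then Part~(i)-type behaviour ($\alpha = 0$) takes over to eliminate the remaining equal copies one at a time. The bookkeeping is the delicate part: I need to count the $\alpha \leq m_2$ steps (roughly $l - q - 1$ of them, bringing $k(l-1)$ down to the common value), then the transition steps where the progression block fuses with the parked $k$'s, then the $\alpha = 0$ cleanup steps (about $q - 1$ of them), and verify the total is exactly $l + q - 3$. I expect the main obstacle to be getting this count exactly right — in particular, tracking precisely how many copies of $k$ are present when the head finally collapses, and making sure no off-by-one slips in at the boundary between the $\alpha \leq m_2$ phase and the $\alpha = 0$ phase. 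A clean way to handle this is to maintain a loop invariant describing the full profile after step $j$ (head block, middle block of $k$'s, trailing ones), prove the invariant is preserved by one iteration of the while-loop in each of the two regimes, and read off $s(\vec{m})$ as the value of $j$ at termination.
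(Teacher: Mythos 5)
Your part (i) is correct and is essentially the paper's own argument: every iteration has $\alpha=0$, each such iteration deletes one component, and after $q-1$ iterations the profile $(k,1,\ldots,1)$ is simple.

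For part (ii) there is a genuine gap. The loop invariant you propose -- that after $j$ steps the profile is $(k(l-1-j),\ldots,k(l-q),k,\ldots,k,1,\ldots,1)$ with $j$ parked copies of $k$, valid for all $j<l-2$ -- is false once $j\geq q$. The arithmetic-progression head has only $q$ entries, so after $q-1$ steps it has been reduced to the single entry $k(l-q)$; at that point $\alpha=k(l-q)-k=k(l-q-1)$, which exceeds $m_2=k$ whenever $l>q+2$. The algorithm then executes the $\alpha>m_2$ branch (Line~\ref{simp:alpha-larger} of Algorithm~\ref{alg:simplification-seq}), a case you never discuss in part (ii): it replaces the head by $\alpha$ without deleting a component and without parking a new $k$, so the number of parked $k$'s stops growing at $q-1$. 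Concretely, for $k=2$, $l=6$, $q=3$ the profile evolves as $(10,8,6)\to(8,6,2)\to(6,2,2)\to(4,2,2)\to(2,2,2)\to(2,2)\to(2)$; after $j=3<l-2$ steps there are only two parked $2$'s, not three, and the third step is an $\alpha>m_2$ step. Your phase count is correspondingly mislabelled: the $\alpha\leq m_2$ steps number $q-1$ (one per element peeled off the progression), not $l-q-1$, and it is the surviving head $k(l-q)$ -- not $k(l-1)$ -- that is ground down to $k$ over $l-q-1$ further steps, all but the last of which use the $\alpha>m_2$ branch. The paper's proof is precisely the three-regime count you are reaching for: $q-1$ steps to reach $(k(l-q),k,\ldots,k,1,\ldots,1)$, then $l-q-1$ steps to reach $(k,\ldots,k,1,\ldots,1)$, then $q-1$ steps by part (i), totalling $l+q-3$. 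To repair your argument you need a separate invariant for the middle regime (head $ks$ with $q-1$ parked $k$'s, $s$ decreasing by one per step) and a one-step boundary check at $s=2$, where $\alpha=k=m_2$ and the head merges into the block of $k$'s.
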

\begin{proof}
	Note first that for both statements, we may assume without loss of
	generality that $q=n$ since elements in $X$ with ploidy number one do not contribute to $s(\vec{m})$.
	
	(i): Since $m_i=m_{i+1}$ holds for all $1\leq i\leq n-1$, the difference 
	in dimension between any two consecutive ploidy profiles in 
	$\sigma(\vec{m})$ is one. Hence, $q-1$ operations are needed to
	transform $\vec{m}$ into $\vec{m}_t$. Consequently, 
	$s(\vec{m})=q-1$.
	
	(ii): Since \green{$m_{i-1}-m_i=k$} holds for all $2\leq i\leq q$,  it follows 
	that $q-1$ operations are needed to transform $\vec{m}$ into a ploidy
	profile \blue{$\vec{m'}$ of the form $(k(l-q), k,\ldots, k,1,\ldots, 1)$
	where the components after the last $k$ may or may not exist. To transform
	$\vec{m'}$ into a ploidy profile $\vec{m''}$ of the from $(k, k,\ldots, k,1,\ldots, 1)$
	a further $l-q-1$ operations are needed. By Assertion (i), a further $q-1$ operations
	are needed to transform $\vec{m''}$ into a simple ploidy profile. Since
	$\sigma(\vec{m}) $ is the concatenation of the underlying simplification sequences
	it follows that $s(\vec{m})=q-1+l-q-1+q-1=q+l-3$.
}
	\end{proof}

Together with Lemma~\ref{lem:nm-numbers}, the next result may be viewed as the companion result of  Lemmas~\ref{lem:key} 
and \ref{lem:length-2} for general ploidy profiles.

\begin{proposition}
	For any ploidy profile $\vec{m}$ on $X$ and any \blue{binary}
	attainment of the terminal element in $\sigma(\vec{m})$, the graph 
	 $N(\vec{m})$ is a binary, semi-stable phylogenetic network on $X$ that realizes $\vec{m}$.
	\end{proposition}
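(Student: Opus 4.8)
The plan is to start from Lemma~\ref{lem:nm-numbers}, which already tells us that $N(\vec{m})$ is a binary phylogenetic network on $X$ that realizes $\vec{m}$, so that only semi-stability remains. First I would reduce semi-stability to the statement that $N(\vec{m})$ contains no identifiable pair of vertices: if this is known, then (R3) applied to $N(\vec{m})$ gives that $C(N(\vec{m}))$ is equivalent to $C(F(U(N(\vec{m}))))=F(U(N(\vec{m})))$, and since $N(\vec{m})$ is binary it must therefore be a resolution of $F(U(N(\vec{m})))$, hence semi-stable --- this is exactly the last step of the proof of Lemma~\ref{lem:folding}(ii). So the whole proof reduces to showing that $N(\vec{m})$ has no identifiable pair of vertices, and I would do this by induction on $s(\vec{m})$.

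For the base case $s(\vec{m})=0$ we have $\vec{m}=\vec{m}_t$ and $N(\vec{m})=\mathcal A(\vec{m}_t)$, a binary attainment of $\vec{m}$; as observed inside the proof of Lemma~\ref{lem:folding}(ii), a binary attainment contains no identifiable pair of vertices (an identifiable pair would force $h(F(U(N)))<h(N)$, contradicting Lemma~\ref{lem:folding}(i)). For the inductive step, let $\vec{m}_1$ be the second element of $\sigma(\vec{m})$; since $\sigma(\vec{m}_1)$ is the tail of $\sigma(\vec{m})$ (so $s(\vec{m}_1)=s(\vec{m})-1$ and $\vec{m}_t$ is also the terminal element of $\sigma(\vec{m}_1)$), the induction hypothesis gives that $N(\vec{m}_1)$ has no identifiable pair. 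By construction $N(\vec{m})$ is obtained from $N(\vec{m}_1)$ by exactly one of the three operations of Algorithm~\ref{alg:Nmalgorithm} (according to whether $\alpha=0$, $\alpha>m_2$, or $\alpha\le m_2$), so it suffices to check that each of these three operations preserves the property of having no identifiable pair of vertices.

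Here I would use the observation that, for a tree vertex $w$ of a phylogenetic network $N$, the subMUL-tree of $U(N)$ rooted at a copy of $w$ is independent of the copy and equals the unfold of the sub-DAG of $N$ reachable from $w$; hence two distinct tree vertices $u,v$ of $N$ form an identifiable pair precisely when these two unfolds agree. When $\alpha=0$, $N(\vec{m})$ is $N(\vec{m}_1)$ with a bijective relabelling of the leaves together with one leaf replaced by a cherry on two labels $x(m_1),x(m_2)$ that do not occur in $N(\vec{m}_1)$; the induced operation on subMUL-trees is injective on equivalence classes, so no two previously distinct subMUL-trees become equal, and the subMUL-tree at the new cherry vertex (a cherry on two fresh labels) cannot equal that of any old tree vertex, since the latter would then have to be a leaf. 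When $\alpha>m_2$, the only new tree vertex of $N(\vec{m})$ is the vertex $v$ inserted above $x(m_2)$, whose subMUL-tree is a fixed small MUL-tree on the labels $x(m_1),x(m_2)$; the inequality $\alpha>m_2$ on the ploidy numbers, together with the induction hypothesis and the descending-order convention, is used both to rule out that $v$ is identifiable with an old tree vertex and to rule out that the (purely local) surgery converts some old non-identifiable pair into an identifiable one. The case $\alpha\le m_2$ is a combination of these two --- a leaf becomes a cherry and an incoming arc is rerouted onto a new hybrid above $x(m_1)$ --- and is treated in the same way.

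The main obstacle is the last point in each of the latter two cases: showing that the local modification performed by Algorithm~\ref{alg:Nmalgorithm} cannot by accident make two previously inequivalent subMUL-trees equivalent. I would handle this by classifying, for each tree vertex $w$ of $N(\vec{m})$, exactly how the sub-DAG of $N(\vec{m})$ reachable from $w$ differs from the corresponding sub-DAG in $N(\vec{m}_1)$ --- it is either unchanged, or altered in one of a small, explicit list of ways localized near $x(m_1)$ and $x(m_2)$ --- and then comparing these possibilities pairwise, separating the potentially colliding ones by means of the values $m_{N(\vec{m})}(x(m_1))$ and $m_{N(\vec{m})}(x(m_2))$. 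Once the absence of an identifiable pair is established for $N(\vec{m})$, semi-stability follows by the reduction in the first paragraph, which completes the proof.
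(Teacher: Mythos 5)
Your proposal is correct and follows essentially the same route as the paper: both reduce the binary/realization claims to Lemma~\ref{lem:nm-numbers} and then argue inductively along the simplification sequence that, starting from the semi-stable attainment $\mathcal A(\vec{m}_t)$ (Lemma~\ref{lem:folding}(ii)), none of the three operations of Algorithm~\ref{alg:Nmalgorithm} introduces an identifiable pair of vertices. Your explicit recasting of the invariant as ``no identifiable pair'' (with (R3) converting this to semi-stability) and your direct induction merely rephrase the paper's minimal-counterexample argument, and if anything you sketch the case-by-case preservation in more detail than the paper does.
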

\begin{proof}
	\green{In view of Lemma~\ref{lem:nm-numbers}, it suffices to show that $N(\vec{m})$ is semi-stable.}	
	Assume for contradiction that there exists a ploidy profile
	$\vec{m}=(m_1,\ldots, m_n)$ on $X$ such that 
	$N(\vec{m})$ is not semi-stable. Since
	the construction of $N(\vec{m})$ is initialized with \green{an
	attainment of the terminal element $\vec{m}_t$ of $\sigma(\vec{m}): \vec{m}_0=\vec{m}, \vec{m}_1,\ldots, \vec{m}_l=\vec{m}_t$, some $l\geq 0$ and, by Lemma~\ref{lem:folding}(ii), an attainment is semi-stable} 
	there must exist some $0\leq i\leq l$
	such that the network \green{$N(\vec{m}_i)$ is not semi-stable but all networks
	$N(\vec{m}_j)$, $i+1\leq j\leq l$} are semi-stable. Without
	loss of generality, we may assume that $i=0$. Put 
	$\blue{\vec{m'}=\vec{m}_1}$.
	
	We claim first that $m_1\not=m_2$. Indeed, if $m_1=m_2$ then
	\blue{$\alpha =0$. Hence, Line~\ref{alpha=0} in Algorithm~\ref{alg:Nmalgorithm}
	is executed to obtain $N(\vec{m})$ from $N(\vec{m'})$.}
	Since, by assumption, $N(\vec{m})$ is not semi-stable it follows that
	\blue{$N(\vec{m'})$} is not semi-stable; a contradiction. Thus,
	$m_1\not=m_2$, as claimed.

	We next claim that $m_1>m_2$ cannot hold either. 
	Assume for contradiction that $m_1>m_2$. 
	Put \blue{$\alpha=m_1-m_2$}.   
	Assume first that $\alpha>m_2$. 
	\blue{Then Line~\ref{alpha-larger} in  Algorithm~\ref{alg:Nmalgorithm}
	is executed to obtain $N(\vec{m})$ from $N(\vec{m'})$.}
	Since \blue{$N(\vec{m'})$} is
	semi-stable, and this does not introduce an
	identifiable pair of vertices in $N(\vec{m})$, it 
	follows that $N(\vec{m})$ is also semi-stable
	which is impossible.
 
 So assume that  $\alpha\leq m_2$. \blue{Then Line~\ref{alpha-smaller} in  Algorithm~\ref{alg:Nmalgorithm}
	is executed to obtain $N(\vec{m})$ from $N(\vec{m'})$.}
 %
 \blue{Similar arguments as in the previous two cases imply again a contradiction.}
		This completes the proof of the claim.
	
	Thus, $m_1<m_2$ must hold. Consequently, $\vec{m}$ is not a ploidy profile; a contradiction.  Thus,  $N(\vec{m})$ must be semi-stable.

\end{proof}

\section{The hybrid number of a ploidy profile}
\label{sec:optimal-n(m)}
  \blue{In
	this section, we prove Theorem~\ref{theo:min}
	which implies a closed formula for the hybrid number of a ploidy profile
	(Corollary~\ref{cor:upper-bound}). To help illustrate  our theorem, we remark 
	that for Line~\ref{alpha-larger} in Algorithm~\ref{alg:Nmalgorithm} 
not to be executed we must have for every element $\vec{m'}=(m'_1,\ldots, m'_{n'})$, some $n'\geq 2$,
in the simplification sequence of $\vec{m}$ that $m_1'>2m_2'$ does not hold. }

\begin{theorem}\label{theo:min}
	Suppose $\vec{m}$ is a ploidy profile on $X$ 
	such that, for every ploidy profile \blue{in $\sigma(\vec{m})$, Line~\ref{alpha-larger} in  Algorithm~\ref{alg:Nmalgorithm} is not 
	executed.}
	 If \blue{$\mathcal A(\vec{m}_t)$} is an attainment for $\vec{m}_t$ \blue{with which the construction of $N(\vec{m})$ is
	 initialized}
		then $N(\vec{m})$ is an attainment for $\vec{m}$.
\end{theorem}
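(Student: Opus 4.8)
The plan is to argue by induction on $s(\vec{m})$, the length of the simplification sequence, with the base case handled by the hypothesis that $\mathcal A(\vec{m}_t)$ is an attainment of $\vec{m}_t$. For the inductive step, write $\sigma(\vec{m}):\vec{m}_0=\vec{m},\vec{m}_1,\ldots,\vec{m}_{s(\vec{m})}=\vec{m}_t$ and assume $N(\vec{m}_1)$ is an attainment of $\vec{m}_1$; since the simplification sequence of $\vec{m}_1$ is the tail of $\sigma(\vec{m})$, the hypothesis that Line~\ref{alpha-larger} is never executed is inherited, so this is legitimate. It then remains to show that the single traceback step producing $N(\vec{m})$ from $N(\vec{m}_1)$ does not waste hybrid vertices, i.e. that $h(N(\vec{m}))\leq h(\vec{m})$ (the reverse inequality is automatic since $N(\vec{m})$ realizes $\vec{m}$). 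Because Line~\ref{alpha-larger} is excluded, only two cases occur: $\alpha=m_1-m_2=0$ (Line~\ref{alpha=0}) or $0<\alpha\leq m_2$ (Line~\ref{alpha-smaller}). In the $\alpha=0$ case the construction merely relabels leaves and replaces a leaf by a cherry, adding no hybrid vertex, so $h(N(\vec{m}))=h(N(\vec{m}_1))=h(\vec{m}_1)$, and one checks directly from the definition of $h(\cdot)$ on ploidy profiles that $h(\vec{m})=h(\vec{m}_1)$ here; this case is easy.

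The substance is the case $0<\alpha\leq m_2$. Here Line~\ref{alpha-smaller} adds exactly one new arc $(v,u)$, hence exactly one unit of hybrid number: $h(N(\vec{m}))=h(N(\vec{m}_1))+1=h(\vec{m}_1)+1$. So I must show $h(\vec{m})\geq h(\vec{m}_1)+1$, equivalently that no realization of $\vec{m}$ can do with only $h(\vec{m}_1)$ hybrid vertices. Suppose for contradiction $N$ is an attainment of $\vec{m}$ with $h(N)=h(\vec{m}_1)$. The strategy is to invoke Proposition~\ref{prop:key}: there is a directed path $P$ in $F(U(N))$ from the root to $x_1$ carrying all hybrid vertices, and by Lemma~\ref{lem:folding}(i) we may replace $N$ by $F(U(N))$, so assume all hybrid vertices of $N$ lie on a single root-to-$x_1$ path. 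Now I want to exploit the multiplicity arithmetic: along $P$, the multiplicity $m(x_1)=m_1$ is a product/sum built from the indegrees of the hybrid vertices, while the leaf $x_2$ (whose multiplicity is $m_2=m_1-\alpha$) must "branch off" $P$ at some hybrid vertex or tree vertex. The defining relation of the $\alpha\leq m_2$ rule — that $\vec{m}_1$ is obtained by deleting $m_1$, inserting $\alpha$ in sorted position, and keeping $m_2,\ldots$ — should be reinterpreted as saying that any realization of $\vec{m}$ with all hybrid vertices on one path yields, after cutting the path just above where $x_2$ attaches and redirecting, a realization of $\vec{m}_1$ with one fewer hybrid vertex; this contradicts the inductive hypothesis that $h(\vec{m}_1)$ is already minimal, forcing $h(\vec{m})\geq h(\vec{m}_1)+1$.

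The main obstacle, which I would treat carefully rather than wave through, is making the last reduction precise: showing that the traceback operation of Line~\ref{alpha-smaller} is \emph{exactly} reversible at the level of optimal networks, i.e. that an optimal realization of $\vec{m}$ can always be taken to have the local structure (a hybrid vertex $u$ on the $x_1$-path feeding a vertex $v$ on the $x_j$-path, with $x_1,x_2$ forming a cherry below $u$) that Line~\ref{alpha-smaller} builds, so that deleting the arc $(v,u)$ and contracting gives a realization of $\vec{m}_1$. Establishing this normal form will use Proposition~\ref{prop:key} to pin down the hybrid vertices on one path, Lemma~\ref{lem:folding}(iii) to peel off trivial cut-arcs (reducing bookkeeping of the unit-multiplicity leaves and of $x_2$), and a short multiplicity-counting argument along the path to identify which hybrid vertex must be the one that, when removed, changes $m_1$ to $\alpha$ while leaving $m_2,m_3,\ldots$ intact. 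I expect the combinatorics of "inserting $\alpha$ in sorted position" to require a small case split according to whether $\alpha$ coincides with some existing $m_j$ or strictly separates $m_{j+1}<\alpha<m_j$, mirroring the two subcases of Line~\ref{simp:alpha-larger}--\ref{cond-alpha-smaller} of Algorithm~\ref{alg:simplification-seq}, but each subcase should close by the same contradiction with minimality of $h(\vec{m}_1)$.
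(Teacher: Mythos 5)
Your overall strategy---induction along the simplification sequence, disposing of the $\alpha=0$ step as hybrid-number-neutral, and reducing the $0<\alpha\le m_2$ step to showing that an optimal realization of $\vec{m}$ can be surgically cut down to a realization of $\vec{m}_1$ with one fewer hybrid vertex---is exactly the paper's strategy (the paper phrases the induction as a minimal counterexample in $\sigma(\vec{m})$ rather than as an explicit induction, but that is cosmetic). The $\alpha=0$ case is acceptable as you state it.

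The problem is that essentially the entire content of the theorem lives in the step you defer as ``the main obstacle,'' and what you sketch for it is not yet an argument. Concretely: after using Proposition~\ref{prop:key} to place all hybrid vertices of an optimal $Q$ on a root-to-$x_1$ path $P$, and after normalizing $Q$ so that the branch point $t$ toward $x_2$ is the parent of $x_2$, one still has to prove a structural normal form for the subpath $P^*$ of $P$ from $t$ to $x_1$. The paper shows that either every internal vertex of $P^*$ is a hybrid vertex (when the two parents $p_1,p_2$ of the lowest hybrid vertex $h^*$ on $P^*$ are distinct), or $P^*$ consists of exactly the four vertices $t,p_1,h^*,x_1$ with a bead at $h^*$ (when $p_1=p_2$). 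These claims are what guarantee that deleting the appropriate arc removes exactly $m_2$ of the root-to-$x_1$ paths, turning $m_1$ into $\alpha$ while leaving every other multiplicity intact, so that the resulting network really does realize $\vec{m}_1$ with exactly one less unit of hybrid number. They are proved by a multiplicity count of the shape $m_2\ge\alpha=m_1-m_2\ge\beta+2m_2-m_2$, forcing $\beta=0$; this is precisely where the hypothesis $\alpha\le m_2$ (i.e.\ that Line~\ref{alpha-larger} is never executed) enters. Your sketch never invokes that hypothesis, yet the theorem is false without it (Figure~\ref{fig:barrednetwork}, $\vec{m}=(8,2)$). Until the normal form is established and the role of $\alpha\le m_2$ is made explicit, the proposed reduction ``cut the path just above where $x_2$ attaches and redirect'' could remove more or fewer than $m_2$ paths to $x_1$ and need not produce a realization of $\vec{m}_1$ at all, so the contradiction with the minimality of $h(\vec{m}_1)$ does not yet follow.
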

\begin{proof}
\blue{Put $\vec{m}=(m_1,\ldots, m_n)$ and assume that $\vec{m}$ is such that $\mathcal A(\vec{m}_t)$ is an
attainment of $\vec{m}_t$. }
	Suppose $X=\{x_1,\ldots, x_n\}$, 
	$1\leq n$.	Note that we may assume 
	that $n\geq 2$ as 
	otherwise $\vec{m}$ is simple. \blue{Hence,
	$\vec{m}=\vec{m}_t$  and, so,}  the theorem follows by assumption 
	on $\vec{m}_t$.  \blue{Similar arguments as before imply that we may also assume that $\vec{m}$ is not simple.}
	
	Assume for contradiction that $N(\vec{m})$ is not an attainment of
		$\vec{m}$. Let $Q$ denote an attainment of $\vec{m}$. Then $h(Q)<h(N(\vec{m}))$.
	 In view of Proposition~\ref{prop:key}, there must exist a directed path $R$ in $F(U(Q))$ from \blue{the root} $\rho$  \blue{of $F(U(Q))$} to $x_1$
	 that contains all hybrid vertices of  \blue{$F(U(Q))$}. Since $h(Q)=h(F(U(Q)))$ as $C(Q)$ and $F(U(Q))$ are equivalent
	 \blue{by (R3)}, it follows that we may also assume that $Q$ is binary and that $R$ gives rise to a
	  path $P$ from $\rho$ to $x_1$ that contains all \blue{hybrid} vertices \blue{of $Q$}.
	
	Since the construction of $N(\vec{m})$ is initialized with an attainment of 
	$\vec{m}_t$, there must exist a ploidy profile \blue{$\overline{\vec{m}}$}
	in $\sigma(\vec{m})$ such that there exists a binary phylogenetic network 
	\blue{$\overline{Q}$} that realizes \blue{$\overline{\vec{m}}$}
	and for which \blue{$h(\overline{Q})<h(N(\overline{\vec{m}}))$} holds.
	Without loss of generality, we may assume that
	\blue{$\overline{\vec{m}}$} is such that for all ploidy profiles
	$\vec{m''}$ succeeding \blue{$\overline{\vec{m}}$} in 
	$\sigma(\vec{m})$ we have $h(N(\vec{m''}))\leq h(Q'')$
	for all binary phylogenetic networks $Q''$ that 
	realize $\vec{m''}$. For ease of presentation we may assume that \blue{$\overline{\vec{m}}=\vec{m}$}.
	
	Put \blue{$\vec{m'}=\vec{m}_1=(m_1',\ldots, m_{l'}')$, some $l'\geq 1$.
	Also, put $\alpha=m_1-m_2$,}
	$N=N(\vec{m})$, and \blue{$N'=N(\vec{m}')$.
	Since Line~\ref{alpha-larger} in Algorithm~\ref{alg:Nmalgorithm} is not executed for any element in
	$\sigma(\vec{m})$, it follows that either $\alpha=0$ or that $\alpha\leq m_2$ since either
	Line~\ref{alpha=0} or Line~\ref{alpha-smaller} of that algorithm must be executed in a 
	pass through the algorithm's while loop.} 
\\
	
\noindent{\bf Case (a):} \blue{Assume that $\alpha=0$. Let $x_1=x(m_1)$ 
	and $x_2=X(m_2)$ as in Line~\ref{cond-alpha=0} in Algorithm~\ref{alg:Nmalgorithm}.} 
 Let $2\leq \blue{r}\leq n$ such that
	$m_1=\blue{m_r}$ holds. By the minimality of
	$h(Q)$ it follows that the induced subgraph $T$ 
	of $Q$ connecting the elements in $X_1=\{x_1,\ldots, \blue{x_r}\}$ must be a phylogenetic tree on $X_1$ 
	\blue{where, for all $3\leq j \leq k$, we put $x_j=x(m_j)$.}
	Subject to potentially
	having to relabel the leaves of $T$, we may assume that $\{x_1,x_2\}$ is a cherry in $T$. 
	Since $\alpha=0$ the directed acyclic graph \blue{$Q'$} obtained from $Q$ by deleting $x_1$ and its incoming
	arc (suppressing resulting vertices of indegree
	and outdegree one) and renaming $x_{i+1}$ by \blue{$x(m_i')$}, for all $1\leq i\leq n-1$, is a phylogenetic 
	network on \blue{$\{x(m_1'),\ldots, x(m'_{n-1})\}$}.	Clearly, \blue{$Q'$} realizes $\vec{m'}$ \blue{since $Q$ realizes $\vec{m}$}. 
	By assumption on $\vec{m}$ it follows that \blue{ $N'$ is an attainment of
		$\vec{m'}$. Hence, $h(N')\leq h(Q')$}. 	Since $N$ is obtained
	from \blue{$N'$} by \blue{executing Line~\ref{alpha=0} in Algorithm~\ref{alg:Nmalgorithm}}
	%
	 it follows that $h(Q)<h(N)= \blue{h(N')\leq h(Q')}=h(Q)$ because $T$ is a tree; a contradiction. Consequently, \blue{$N$} must attain
	$\vec{m}$ in this case.\\

	\noindent {\bf Case (b):} Assume that \blue{$\alpha\leq m_2$. Let $j$, $x_1$, and $x_2$ be as in Line~\ref{cond-alpha-smaller} in Algorithm~\ref{alg:Nmalgorithm}.} We start with analyzing the structure of $Q$ with
		regards to $x_1$ and $x_2$. To this end, note first that $m_2\geq 2$ must hold
		since otherwise $\vec{m}$ is simple and the theorem follows in view of our observation at the beginning of the proof. 
			
By assumption on $Q$,
			there must exist a hybrid vertex $h$ on $P$ such that there is a directed path $P_h$ from $h$ to $x_2$
			because $m_2\geq 2$.  Without loss of generality,
			we may assume that $h$ is such that every vertex on 
			$P_h$ other than $h$ is either a tree vertex or a leaf
			of $Q$. Let $t$ be the last vertex on $P$ that is also
			contained in $P_h$.  
			
			We next transform $Q$ into a new phylogenetic network $Q''$ that is an attainment of $\vec{m'}$ (see Figure~\ref{fig:theo-illustrate}
				\begin{figure}[h!]
					\centering
					\includegraphics[scale=0.42]{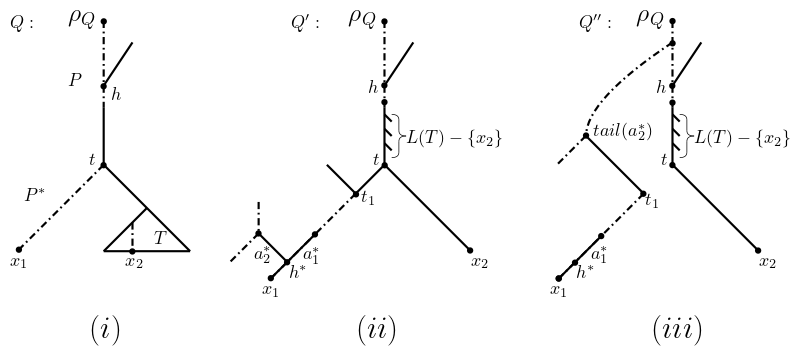}\
					\caption{The transformation of $Q$ (i) into
						the phylogenetic networks $Q'$ (ii) and
						$Q''$ (iii) as described in Case~(b)
					of 	Theorem~\ref{theo:min} for \blue{$p_1\not=p_2$}. In each case, the dashed lines indicate paths. Note that in (iii) the dashed line could also start at $\rho_Q$.
						\label{fig:theo-illustrate}}
				\end{figure}
				for an illustration). To do this, note first that since $m_2\not=m_1$ there must exist a hybrid vertex on $P$ below $t$. We  modify $Q$ as follows to obtain a further attainment $Q'$
			of $\vec{m}$. If $t$ is the parent of $x_2$ then
			$Q'$ is $Q$. So assume that $t$ is not the parent of 
			$x_2$. Then we delete the subtree $T$ of $Q$ that is rooted at the child of $t$ not contained in $P$. Note that $T$ must have at least two leaves. Next, we subdivide the incoming arc of $t$ by $|L(T)|-1$
			subdivision vertices. To each created subdivision vertex we add \blue{an} arc and bijectively label the
			heads of these arcs by the elements in $L(T)-\{x_2\}$.
			Next, we add \blue{an} arc to $t$ and label its head by $x_2$ so that $t$ is now the parent of $x_2$.
			\blue{By construction,} $Q'$ is a phylogenetic network
			on $X$ that attains $\vec{m}$ because $h(Q)=h(Q')$.
			
			Let $h^*$ be a hybrid vertex on the subpath $P^*$ of
			$P$ from $t$ to $x_1$ so that no vertex strictly 
			below $h^*$ is a hybrid vertex of $Q'$. Let $a_1^*$ denote the incoming arc of $h^*$ that lies on $P^*$. 
			Furthermore, let $a_2^*$ denote the incoming arc of $h^*$ that does not lie on $P^*$.
			For $i=1,2$, let $p_i$ denote the tail of $a_i^*$. Note that $p_1=p_2$ might hold. Also note that the assumptions on $Q$ imply that  $p_1$ must be below $t$.  Finally, note that $p_1$ must be a hybrid vertex unless	$p_1=p_2$. 
		
		We claim that if $p_1\not=p_2$ then any vertex $v$ on $P^*$ other than $t$ and $x_1$ must be a hybrid vertex.   Assume for contradiction that there exists a vertex $v\not\in\{t,x_1\}$ on $P^*$ that is a tree vertex. 
			We show first that $p_2$ must also be below $t$.
			\blue{Since all hybrid vertices of $Q$ lie on $P$, it follows that}, 
			$v$ contributes at least $2m_2$ to the number of directed paths from $\rho$ to $x_1$ as $m_2$ 
			is the number of directed paths from $\rho$
			to $x_2$ and therefore, also from $\rho$ to $t$.
			Since $h_1^*$ contributes at least one further directed path from $\rho$ to $x_1$ in case $p_2$ is not below $t$, it follows that $m_1\geq \beta+\blue{2}m_2$ for some $\beta\geq 1$. Hence,
			$m_2\geq\alpha=m_1-m_2 \geq\beta +\blue{2}m_2-m_2\geq m_2$ because $\beta\geq 1$.
			Thus, $m_2=\beta+m_2$; a contradiction as $\beta\geq 1$. Hence, $p_2$ must also be below $t$, as required.
			
			\blue{We next show that $p_2$ must be a vertex on $P^*$. Indeed,}
			if $p_2$ were not a vertex of $P^*$ then it
			cannot be a hybrid vertex in view of our assumptions on $Q$. Thus, $p_2$  must be a tree vertex in this case. Since $p_1\not=p_2$
			we obtain a contradiction as the choice of $h^*$
			implies that $h^*$ is the parent of $x_1$. Thus,
			$p_2$ must be a vertex of $P^*$, \blue{ as required. Since $p_2$ is a tree vertex it  
			contributes at least $2m_2$ directed paths from $\rho$ to
			$x_1$. Since $p_1$ contributes at least a further $m_2$ directed paths from $\rho$ to $x_1$, we obtain
			a contradiction using similar arguments as before.} Thus any
vertex on $P^*$ other than $t$ and $x_1$ must be a hybrid vertex in case $p_1\not=p_2$, as claimed.

We claim that if $p_1=p_2$ then $P^*$ has precisely 4 vertices and there exists two arcs from $p_1$ to $h^*$. To see this claim, note that $p_1$ contributes at least $2m_2$
	directed paths from $\rho$ to $x_1$ \blue{because it is a tree vertex}. If there existed
	a vertex $v$ on $P^*$ distinct from
	$x_1$, $h^*$, $p_1$, $t$ then $v$ would contribute at least $m_2$ further directed paths from
	$\rho$ to $x_1$. Thus, we have again at least $3m_2$ directed paths from $\rho$ to $x_1$. Similar arguments as in the previous claim yield again a contradiction.  
	By the choice of $h^*$ it follows that $t$, $p_1$, $h^*$ and $x_1$ are the only vertices on $P^*$. Since $p_1$
	and $p_2$ are the parents of $h^*$ and $p_1=p_2$, it follows that
	there are two parallel arcs from $p_1$ to $h^*$. This concludes the proof of 
	\blue{our second} claim.

Bearing in mind the previous two claims, we 
	next transform $Q'$ into a new phylogenetic network $Q''$ on $X$ as follows. If $p_1\not=p_2$ then we first delete $a_2^*$ from $Q'$ and add an arc from 
				$p_2$ to the child $t_1$ 
				of $t$ on $P^*$. 
				Next, we remove the arc $(t,t_1)$ and suppress $h^*$ and $t$ as they are now vertices with
				indegree one and outdegree one. The resulting directed acyclic graph is
				$Q''$. \blue{By construction,} $Q''$ is \blue{clearly} a phylogenetic network on $X$. \blue{Furthermore, the} construction \blue{combined with our}
				 two claims, \blue{implies} that 
				$Q''$ realizes $\vec{m}'$ because the arc $(t,t_1)$ contributes $m_2$ directed paths from $\rho$ to $x_1$ in $Q$ and therefore also in $Q'$. By construction, 
				$h(Q'')=h(Q')-1=h(Q)-1$. Furthermore, $h(N)=h(\blue{N'})+1$ by the construction of $N$ from \blue{$N'$}. By the minimality of $h(Q)$ 
				\blue{and the choice of $\vec{m}$,} it follows that $h(Q)<h(N)=h(\blue{N'})+1\leq h(Q'')+1=h(Q)$;
				a contradiction. This concludes the proof of the theorem in case \blue{$p_1\not=p_2$.}
			
			If $p_1=p_2$ then we delete one of the
				two parallel arcs from $p_1$ to $h^*$ and suppress $p_1$ and $h^*$ as this has rendered them vertices of indegree one and outdegree one. The resulting
				directed acyclic graph is $Q''$ \blue{in this case}.
				As before, $Q''$ is a phylogenetic network
				that, in view of our second claim, realizes \blue{$\vec{m'}$}. Similar arguments
				as in the case that $p_1\not=p_2$ yield again a contradiction. This concludes the 
				proof of the theorem in this case,
				and therefore, the proof of the theorem.
\end{proof}

To illustrate Theorem~\ref{theo:min}, note that the ploidy profile $\vec{m}=(12, 6, 6, 5)$ in
	Figure~\ref{fig:intro-fig} satisfies the assumptions of Theorem~\ref{theo:min}. Consequently,
	the phylogenetic network $N(\vec{m})$
	depicted in that figure is an attainment of $\vec{m}$.

As the example depicted in Figure~\ref{fig:barrednetwork} indicates, the assumption that
\blue{Line~\ref{alpha-larger} in  Algorithm~\ref{alg:Nmalgorithm} is not 
	executed} is necessary for Theorem~\ref{theo:min} to hold. 
	In fact, if $\vec{m}$ is a ploidy profile such that
	$N(\vec{m})$ contains the subgraph highlighted by the dashed
	rectangle in the network in Figure~\ref{fig:barrednetwork},
	then $N(\vec{m})$ can in general not  be an attainment of $\vec{m}$.
	\begin{figure}[h]
		\centering
		\includegraphics[scale=0.3]{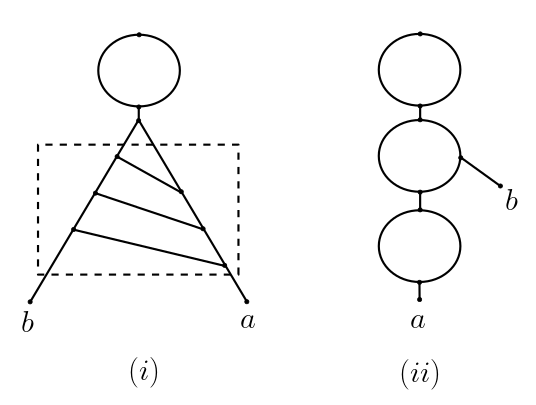}
		\caption{\label{fig:barrednetwork}
			(i) \blue{The} phylogenetic network $N(\vec{m})$ for the ploidy profile $\vec{m}=(8,2)$ on $X=\{a,b\}$ obtained via Algorithms~\ref{alg:simplification-seq} and \ref{alg:Nmalgorithm}.
			(ii) A phylogenetic network on $X$ that attains $\vec{m}$ and has fewer hybrid vertices than $N(\vec{m})$.
		}	
	\end{figure}

Theorem~\ref{theo:min} and Case~(b) in its proof combined with Theorem~\ref{Bm-vertex-count} and Proposition~\ref{prop:upper-bound} implies our next result
\orange{since $l-1$ additional hybrid vertices are inserted into  $B(i_1)$ to obtain $B(\vec{m})$ 
where $\vec{m}$ is a simple ploidy profile and $(i_1,\ldots, i_l)$, $l\geq 1$, is the binary 
representation of the first component of $\vec{m}$.}
To state it we require a further definition. Let $\vec{m},\vec{m}_1,\ldots,\vec{m}_i=(m_{1,i},\ldots, m_{p_i,i}),\ldots, \vec{m}_t$  denote the simplification
sequence of a ploidy profile
$\vec{m}$. Then we denote by $c(s(\vec{m}))$ the number of steps in $\sigma(\vec{m})$, for which $m_{1,i}>m_{2,i}$ holds where $0\leq i\leq s(\vec{m})$ and $p_i\geq 1$.

\begin{corollary}\label{cor:upper-bound}	
	Suppose $\vec{m}$ is a ploidy profile 
	\blue{such that Line~\ref{simp:alpha-larger} in Algorithm~\ref{alg:simplification-seq} is not executed when constructing
	$\sigma(\vec{m})$.} Then  $h(\vec{m})= h(\vec{m}_t)+c(s(\vec{m}))$.
	If $B(\vec{m}_t)$ is an attainment of $\vec{m}_t$ and
	\blue{$(i_1,\ldots,i_l)$}
	is the binary representation of \blue{the first component of} $\vec{m}_t$, some 
	\blue{$l\geq 1$},
	then $h(\vec{m})= i_1+\orange{l-1} +c(s(\vec{m}))$.
\end{corollary}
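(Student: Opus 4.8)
The plan is to read off $h(\vec{m})$ from the network $N(\vec{m})$ produced by Algorithms~\ref{alg:simplification-seq} and \ref{alg:Nmalgorithm}, using Theorem~\ref{theo:min} to guarantee that $N(\vec{m})$ is optimal and then counting its hybrid vertices step by step. First I would note that the hypothesis of the corollary is precisely the hypothesis of Theorem~\ref{theo:min}: when Algorithm~\ref{alg:Nmalgorithm} tracks back over a consecutive pair $\vec{m}_i,\vec{m}_{i+1}$ of $\sigma(\vec{m})$ it recomputes $\alpha=m_1-m_2$ from the predecessor $\vec{m}_i=(m_1,\ldots,m_l)$, and this is exactly the value of $\alpha$ that governed which branch of Algorithm~\ref{alg:simplification-seq} was taken when $\vec{m}_{i+1}$ was derived from $\vec{m}_i$. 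Hence, if Line~\ref{simp:alpha-larger} (the branch $\alpha>m_2$) is never executed while building $\sigma(\vec{m})$, then Line~\ref{alpha-larger} is never executed while building $N(\vec{m})$, so Theorem~\ref{theo:min} applies: for any attainment $\mathcal{A}(\vec{m}_t)$ of $\vec{m}_t$ used for the initialization, the network $N(\vec{m})$ is an attainment of $\vec{m}$, and therefore $h(\vec{m})=h(N(\vec{m}))$.

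Next I would compute $h(N(\vec{m}))$ by accumulating the contributions along the traceback. The initial network $\mathcal{A}(\vec{m}_t)$ contributes $h(\mathcal{A}(\vec{m}_t))=h(\vec{m}_t)$ because it is an attainment. At every traceback step the predecessor $\vec{m}_i$ has dimension at least two, and, since Line~\ref{alpha-larger} is excluded, exactly one of Line~\ref{alpha=0} (case $\alpha=0$) or Line~\ref{alpha-smaller} (case $0<\alpha\le m_2$) is executed. Executing Line~\ref{alpha=0} only relabels leaves and replaces one leaf by a cherry, so it creates no hybrid vertex and leaves the hybrid number unchanged. Executing Line~\ref{alpha-smaller} introduces the subdivision vertex $u$ on the incoming arc of $x(m_1)$, and the subsequently added arc $(v,u)$ gives $u$ a second incoming arc; after the suppression of $v$ the vertex $u$ is the unique newly created hybrid vertex and has indegree two, no other indegree being altered, so this step raises the hybrid number by exactly one. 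Because every ploidy profile in $\sigma(\vec{m})$ is in descending order, a step has $\alpha>0$ precisely when its predecessor $\vec{m}_i=(m_{1,i},m_{2,i},\ldots)$ satisfies $m_{1,i}>m_{2,i}$, so the number of $\alpha>0$ steps is $c(s(\vec{m}))$. Summing the contributions gives $h(N(\vec{m}))=h(\vec{m}_t)+c(s(\vec{m}))$, and combined with the previous paragraph this yields $h(\vec{m})=h(\vec{m}_t)+c(s(\vec{m}))$.

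For the second formula I would initialize with $\mathcal{A}(\vec{m}_t)=B(\vec{m}_t)$, which is an attainment of $\vec{m}_t$ by hypothesis. By Theorem~\ref{Bm-vertex-count}, $B(\vec{m}_t)$ has $i_1+\dim(\vec{i}_{m_1})-1=i_1+l-1$ hybrid vertices, where $(i_1,\ldots,i_l)$ is the binary representation of the first component of $\vec{m}_t$; since $B(\vec{m}_t)$ is binary (Lemma~\ref{lem:length-2}), all of its hybrid vertices have indegree two, so $h(\vec{m}_t)=h(B(\vec{m}_t))=i_1+l-1$. Substituting into $h(\vec{m})=h(\vec{m}_t)+c(s(\vec{m}))$ gives $h(\vec{m})=i_1+l-1+c(s(\vec{m}))$.

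The step I expect to be the main obstacle is the claim in the second paragraph that executing Line~\ref{alpha-smaller} adds exactly one to the hybrid number. Making this precise requires tracing through the chain of local operations — the two subdivisions, the cherry replacement, the new arc $(v,u)$, the deletion of $x(m_j')$ together with its incoming arc, and the resulting suppression of $v$ — and verifying that $u$ is the only vertex whose indegree changes and that its indegree becomes exactly two (in particular that suppressing $v$ does not identify $u$ with an existing hybrid vertex or push its indegree above two). Should one prefer, the same bookkeeping can be extracted from Case~(b) in the proof of Theorem~\ref{theo:min}, where the transformation of an attainment of a non-simple profile into an attainment of its successor is shown to remove exactly one hybrid vertex. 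Everything else is routine: the translation of hypotheses in the first paragraph and the arithmetic with Theorem~\ref{Bm-vertex-count} in the third.
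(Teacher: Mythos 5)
Your proposal is correct and follows essentially the same route as the paper, which likewise deduces the result from Theorem~\ref{theo:min} together with the hybrid-vertex count of $B(\vec{m}_t)$ from Theorem~\ref{Bm-vertex-count}; the paper's own proof is a one-line citation of these ingredients, whereas you spell out the bookkeeping (the equivalence of the two algorithms' branch conditions, and the fact that each traceback step with $\alpha>0$ creates exactly one new indegree-two hybrid vertex $u$ while an $\alpha=0$ step creates none), which checks out.
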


\section{A \orange{Viola} dataset}
\label{sec:viola}
In this section, we turn our attention to computing the hybrid number of the ploidy profile 
of a \orange{Viola}  dataset 
that appeared in more general form in
\cite{MJDBBBO12}. Denoting that dataset by $X$, the authors of \cite{MJDBBBO12} constructed a MUL-tree $ M$ on $X$ and then used the PADRE software \cite{HOLM06} to derive a phylogenetic network $N$ to help them shed light on the evolutionary past of their \orange{Viola}  species \blue{\cite[Figure 4]{MJDBBBO12}}. We depict a simplified network $N'$ representing that past in Figure~\ref{fig:viola}(i) the only
difference being that we have removed species that are not below a hybrid vertex of $N$ as they do not contribute to the number of hybrid vertices of $N$. If more than one species were below a hybrid vertex of $N$, then we have also randomly removed all but one of them thereby ensuring that the hybrid vertex is still present in $N'$.
\blue{The resulting simplified dataset comprises the taxa
	$x_1=${\em V.langsdorffii}, $x_2=${\em V.tracheliifolia}, $x_3$= {\em V.grahamii},
	$x_4=${\em V.721palustris}, $x_5=${\em V.blanda}, $x_6=${\em V.933palustris},
	$x_7=${\em V.glabella}, $x_8=${\em V.macloskeyi}, $x_9=${\em V.repens}
	$x_{10}=${\em V.verecunda}, $x_{11}=${\em Viola}, and $x_{12}=${\em Rubellium} (see \cite{HM22} for more details on the simplified dataset)}. The labels of the internal vertices of $N'$ represent the ploidy number of the ancestral species represented by that vertex where we canonically extend the concept of a ploidy profile to the interior vertices of a phylogenetic network. \blue{By counting directed paths from the root to each leaf, it}  is easy to check, 
\blue{$h(N')=9$}. 

By taking directed paths from the root to the leaves of $N'$, we obtain the ploidy profile \blue{$\vec{m} = (9,7,7,4,4,4,2,2,2,2,2,1)$} on $X$. Note, since the root is diploid (labelled $2\times$), multiplying each component of $\vec{m}$ by two results in the 
ploidy numbers induced by the hybrid vertices in the network. The simplification sequence for $\vec{m}$ 
contains twelve elements and $\vec{m}_t=(2,1,1,1)$. Since 
an attainment of $\vec{m}_t$ must have \blue{one hybrid
vertex and $D(\vec{m}_t)$ are equal $B(\vec{m}_t)$ and have  one hybrid vertex each, it
follows that $B (\vec{m}_t)$ is an attainment for $\vec{m}_t$.
The phylogenetic network $N(\vec{m})$ obtained by initializing Algorithm~\ref{alg:Nmalgorithm}
with $B(\vec{m}_t)$ is depicted in Figure~\ref{fig:viola}(ii). Since at no stage in the
construction of $N(\vec{m})$ Line~\ref{alpha-larger} of that algorithm  is executed, it follows by  
Theorem~\ref{theo:min} that $N(\vec{m})$} is an attainment of $\vec{m}$. \blue{Counting
again directed paths from the root to each leaf, it} is easy to check \blue{that} 
$N(\vec{m})$ has five hybrid vertices implying that $h(\vec{m})=5$. To compute \blue{the} hybrid 
number of a ploidy profile whose components are not too large \blue{and, thererfore, we can find an attainment of its terminal element}, we refer the interested 
reader to our R-function `ploidy profile hybrid number bound (PPHNB)' which is obtainable from \blue{\cite{G22}}.

\begin{figure}
	\centering
	\includegraphics[scale= 0.38]{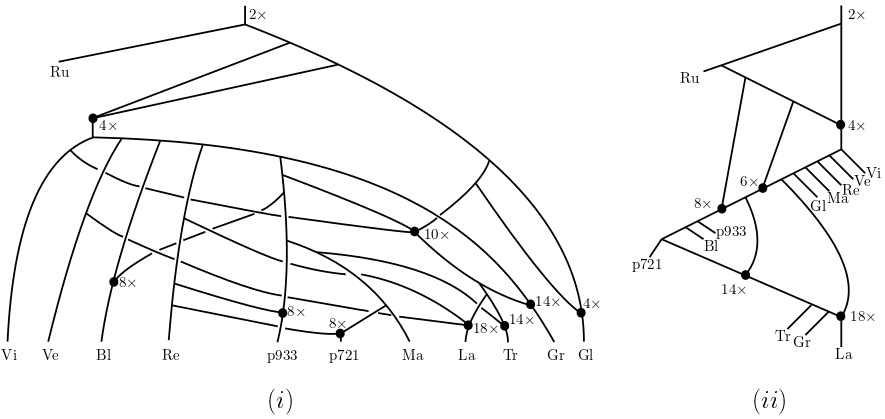}
	\caption{A phylogenetic network on leaf set $X = \{$\blue{\em V.langsdorffii, V.tracheliifolia, V.grahamii, V.721palustris, V.blanda, V.933palustris, V.glabella,  V.macloskeyi, 
	V.repens, V.verecunda,} Viola, Rubellium$\}$ adapted from a more general phylogenetic network that \blue{appeared as Figure 4} in \cite{MJDBBBO12}. Hybrid \blue{vertices} are indicated with a filled circle and labelled \blue{by}
		their corresponding ploidy number i.\,e.\,\blue{the} number of directed paths from the root to the vertex times two because the root is assumed to be diploid. Leaves are labelled by the first two
		characters of their names (omitting {\em 'V.'}, where applicable). }
	\label{fig:viola}
\end{figure}

\section{Discussion}
\label{sec:discussion}
Motivated by the signal left behind by \blue{polyploidization}, we have introduced 
and studied the problem of \green{computing} the hybrid number $h(\vec{m})$ of a 
ploidy profile $\vec{m}$. \blue{Our} arguments
apply, \blue{however,} to any type of dataset that induces a multiplicity vector. 
Although stated within a phylogenetics context, the underlying optimization problem is, at its heart, a natural mathematical problem: ``Given a 
	multiplicity vector $\vec{m}$ find a rooted, leaf-labelled, directed acyclic graph $G$  so that $\vec{m}$
	is the path-multiplicity vector of $G$ and the cyclomatic number of $G$ is minimum''.
	Our results might therefore be also of relevance beyond phylogenetics. 
	
Using the 
framework of a phylogenetic network, 
we provide a construction of a phylogenetic 
network $N(\vec{m})$ that is guaranteed to 
attain \blue{a ploidy profile $\vec{m}$}  for a large class of ploidy profiles provided the construction of $N(\vec{m})$ is initialized with an attainment $\mathcal A(\vec{m}_t)$ of the terminal element $\vec{m}_t$ of \blue{the} simplification sequence $\sigma(\vec{m})$ associated to $\vec{m}$. \blue{Members of that class include 
	the ploidy 
	profiles described in Proposition~\ref{prop-simp-length}(ii). As a consequence,
we obtain} an exact formula for \blue{the hybrid number
of} $\vec{m}$
and also the size of the vertex set of $N(\vec{m})$
in terms of the length $s(\vec{m})$ of  $\sigma(\vec{m})$ and the number $a(\vec{m}_t)$ of vertices of $\mathcal A(\vec{m}_t)$
for the members of \blue{our} class. In case the ploidy numbers 
that make up $\vec{m}$ are not too large, both $c(s(\vec{m}))$ and $a(\vec{m}_t)$ can be  computed easily \blue{by computing $\sigma(\vec{m}$)  to obtain $c(s(\vec{m}))$
and using, for example, an exhaustive search for $a(\vec{m}_t$)}. Having said this, we also present an infinite family of ploidy profiles $\vec{m}$ for
which $\sigma(\vec{m})$ grows exponentially. 
\green{Motivated by this,} we provide 
a bound for $h(\vec{m})$ and show that that bound is sharp
for certain types of ploidy profiles. 
\green{To help demonstrate the applicability of our approach, we compute the hybrid number of a simplified version of a \orange{Viola}  dataset that appeared in more general form in \cite{MJDBBBO12}}. Our result suggests that
the authors of \cite{MJDBBBO12} potentially overestimate the
number of polyploidization events that gave rise to their dataset.


Despite these
encouraging results, numerous questions that might merit further research remain. These include 
\green{``What can be said about $h(\vec{m})$ if the ploidy profile $\vec{m}$ is not a member of our class?'', and ``Can we shed more 
light on the length of $\sigma(\vec{m})$
and also into attainments of the terminal element 
of $\sigma(\vec{m})$?''. Looking a little bit further afield,}
it might also be of interest to explore the relationship
between so called accumulation phylogenies introduced in 
\cite{BS06} and ploidy profiles and also  
the relationship between ploidy profiles and ancestral profiles introduced in
\cite{ESS19}.\\

\green{
	\noindent{\bf Acknowledgment}
	We thank the anonymous referees for their constructive comments to improve \orange{earlier versions} of the paper.
	 }

 \bibliographystyle{plain}
 \bibliography{bibliography2021-01-04}

\begin{thebibliography}{10}

\bibitem{G22}
https://github.com/lmaher1/ploidy-profile-hybrid-number.

\bibitem{BS06}
M.~Baroni and M.~Steel.
\newblock Accumulation phylogenies.
\newblock {\em Annals of Combinatorics}, 10:19--30, 06 2006.

\bibitem{BS07}
M.~Bordewich and C.~Semple.
\newblock Computing the minimum number of hybridization events for a consistent
  evolutionary history.
\newblock {\em Discrete Applied Mathematics}, 155(8):914 -- 928, 2007.

\bibitem{CLRV08}
F.~Rossello G.~Valiente G.~Cardona, M.~Llabres.
\newblock A distance metric for a class of tree-sibling phylogenetic networks.
\newblock {\em Bioinformatics}, 24:14841--1488, 2008.

\bibitem{G14}
D.~Gusfield.
\newblock {\em ReCombinatorics: The Algorithmics of Ancestral Recombination
  Graphs and Explicit Phylogenetic Networks}.
\newblock MIT Press, 2014.

\bibitem{HLM20}
K.T. Huber, S.~Linz, and V.~Moulton.
\newblock The rigid hybrid number for two phylogenetic trees.
\newblock {\em Journal of Mathematical Biology}, 82(40), 2021.

\bibitem{HM22}
K.T. Huber and L.J.. Maher.
\newblock Autopolyploidy, allopolyploidy, and phylogenetic networks with
  horizontal arcs.
\newblock {\em submitted}, 2022.

\bibitem{HM06}
K.T. Huber and V.~Moulton.
\newblock Phylogenetic networks from multi-labelled trees.
\newblock {\em Journal of Mathematical Biology}, 52:613--32, 2006.

\bibitem{HM13}
K.T. Huber and V.~Moulton.
\newblock Encoding and constructing 1-nested phylogenetic networks with
  trinets.
\newblock {\em Algorithmica}, 66:714--738, 2013.

\bibitem{HMSSS12}
K.T. Huber, V.~Moulton, A.~Spillner, S.~Storandt, and R.~Suchecki.
\newblock Computing a consensus of multilabeled trees.
\newblock {\em Proceedings of the Workshop on Algorithm Engineering and
  Experiments}, pages 84--92, 2012.

\bibitem{HMSW16}
K.T. Huber, V.~Moulton, M.~Steel, and T.~Wu.
\newblock Folding and unfolding phylogenetic trees and networks.
\newblock {\em Journal of Mathematical Biology}, 73(6-7):1761--1780, 2016.

\bibitem{HOLM06}
K.T. Huber, B.~Oxelman, M.~Lott, and V.~Moulton.
\newblock Reconstructing the evolutionary history of polyploids from
  multilabeled trees.
\newblock {\em Molecular {B}iology and {E}volution}, 23:1784--1791, 2006.

\bibitem{HS20}
K.T. Huber and G.~E. Scholz.
\newblock Phylogenetic networks that are their own fold-ups.
\newblock {\em Advances in Applied Mathematics}, 113:101959, 2020.

\bibitem{HRS10}
D.~Huson, R.~Rupp, and C.~Scornavacca.
\newblock {\em Phylogenetic Networks}.
\newblock Cambridge University Press, 2010.

\bibitem{JSO13}
S.~Sagitov Jones, G. and B.~Oxelman.
\newblock Statistical inference of allopolyploid species networks in the
  presence of incomplete lineage sorting.
\newblock {\em Systematic {B}iology}, 62:467–478, 2013.

\bibitem{MHBOJ15}
T.~Marcussen, L.~Heier, A.~K. Brysting, B.~Oxelman, and K.~S. Jakobsen.
\newblock From gene trees to a dated allopolyploid network: Insights from the
  {A}ngiosperm genus {V}iola ({V}iolaceae).
\newblock {\em Systematic {B}iology}, 64:84--101, 2015.

\bibitem{MJDBBBO12}
T.~Marcussen, K.~S. Jakobsen, J.~Danihelka, H.~E. Ballard, K.~Blaxland, A.K.
  Brysting, and B.~Oxelman.
\newblock Inferring species networks from gene trees in high-polyploid north
  american and hawaiian violets (viola, violaceae).
\newblock {\em Systematic {B}iology}, 61:107--126, 2012.

\bibitem{MSW15}
C.~McDiarmid, C.~Semple, and D.~Welsh.
\newblock Counting phylogenetic networks.
\newblock {\em Ann. Combin.}, 19:205--224, 2015.

\bibitem{OTK17}
W.~F S.~Tomasello Oberpieler, C. and K.~Konowalik.
\newblock A permutation approach for inferring species networks from gene trees
  in polyploid complexes by minimizing deep coalescences.
\newblock {\em Methods in Ecology and Evolution}, 8:835–849, 2017.

\bibitem{O50}
M.~Ownbey.
\newblock Natural hybridization and amphiploidy in the genus {T}ragopogon.
\newblock {\em American {J}ournal of {B}otany}, 37:487--499, 1950.

\bibitem{ESS19}
M.~Steel P.~L.~Erdos, C.~Semple.
\newblock A class of phylogenetic networks reconstructable from ancestral
  profiles.
\newblock {\em Mathematical Biosciences}, 313:33--40, 2019.

\bibitem{BTWKW}
Emiko M. Waight L. Kubatko A.~Wolfe Paul D.~Blischak, Coleen E. P.~Thompson.
\newblock Inferring patterns of hybridization and polyploidy in the plant genus
  penstemon ({P}lantaginaceae).
\newblock {\em BioRxiv}, 2020.

\bibitem{S16}
M.~Steel.
\newblock {\em Phylogeny: Discrete and {R}andom {P}rocesses in Evolution}.
\newblock Society for Industrial and Applied Mathematics, 2016.

\bibitem{IJJMZ19}
L.~Van~Iersel, R.~Janssen, M.~Jones, Y.~Murakami, and N.~Zeh.
\newblock Polynomial-time algorithms for phylogenetic inference problems
  involving duplication and reticulation.
\newblock {\em IEEE/ACM Transactions on Computational Biology and
  Bioinformatics}, 17:14--26, 2020.

\bibitem{IK11}
L.~van Iersel and S.~Kelk.
\newblock Counting the simplest phylogenetic networks from triplets.
\newblock {\em Algorithmica}, 60:207--235, 2011.

\bibitem{VBDG00}
F.~Varoquaux, R.~Blanvillain, M.~Delseny, and P.~Gallois.
\newblock Less is better: new approaches for seedless fruit production.
\newblock {\em Trends in {B}iotechnology}, 18:233--242, 2000.

\end{thebibliography}

\end{document}